\def\BibTeX{{\rm B\kern-.05em{\sc i\kern-.025em b}\kern-.08em
    T\kern-.1667em\lower.7ex\hbox{E}\kern-.125emX}}
\newtheorem{corollary}{Corollary}
\newtheorem{theorem}{Theorem}
\newtheorem{lemma}{Lemma}
\newtheorem{remark}{Remark}
\newtheorem{example}{Example}
\newcommand{\cX}{{\cal X}}
\newcommand{\cY}{{\cal Y}}
\newcommand{\us}{\alpha}
\newcommand{\bb}{\mathbb}
\newcommand{\blue}{\textcolor[rgb]{0,0,1}}
\newcommand{\e}{\frac{1}{2}\epsilon^2}
\newcommand{\eone}{\frac{1}{2} {\epsilon}_1^2}
\newcommand{\etwo}{\frac{1}{2} \epsilon_2^2}
\newcommand{\ezero}{\frac{1}{2}\epsilon_0^2}
\title{Euclidean Information Theory of Networks}
\author{Shao-Lun Huang, Changho Suh and Lizhong Zheng \\
\thanks{S.-L. Huang and L. Zheng are with the Research Laboratory of Electronics at Massachusetts Institute of Technology, Cambridge, USA (Email: $\mathsf{\{ shaolun, lizhong \}@mit.edu}$).

C. Suh is with the Department of Electrical Engineering at Korea Advanced Institute of Science and Technology, Daejeon, Korea (Email: $\mathsf{chsuh@kaist.ac.kr}$).

This work was supported by the National Research Foundation of Korea (NRF) Grant funded by the Korean
Government (MSIP) (No. Grant Number: 2015R1C1A1A02036561). The material in this paper was presented in part at the IEEE International Symposium on Information Theory, July 2013.
}
}
\begin{document}

\maketitle

%\IEEEaftertitletext{
\begin{abstract}
In this paper, we extend the information theoretic framework that was developed in earlier works to multi-hop network settings. For a given network, we construct a novel deterministic model that quantifies the ability of the network in transmitting private and common messages across users. Based on this model, we formulate a linear optimization problem that explores the throughput of a multi-layer network, thereby offering the optimal strategy as to what kind of common messages should be generated in the network to maximize the throughput. With this deterministic model, we also investigate the role of feedback for multi-layer networks, from which we identify a variety of scenarios in which feedback can improve transmission efficiency. Our results provide fundamental guidelines as to how to coordinate cooperation between users to enable efficient information exchanges across them.
\end{abstract}
\begin{keywords}
Linear Information Coupling (LIC) Problem, Divergence Transition Matrix (DTM), Kullback-Leibler Divergence Approximation,
Deterministic Model, Feedback.
\end{keywords}

\section{Introduction}

With the booming of internet and mobile communication, communication networks and social networks are rapidly growing in size and density. While the global behavior of such a large network depends on actions of individual users indeed, the sheer volume of the network makes the effect of an individual action often nonsignificant. For instance, in social networks (or stock-market networks), a public opinion (or the growth rate of wealth) is barely affected by an individual's opinion (or investment), although it is formed by their aggregation.

One natural objective for such large networks is to understand how users should design their local transmission strategies to optimize network information flow. To this end, we aim to develop an information-theoretic framework that can well model such network phenomena, as well as suggest the optimal transmission strategy of each user.

Specifically, we consider a discrete memoryless network such that the input/output distributions of each node are fixed, and each node wishes to convey information by slightly perturbing the given input distribution. In this network, we intend to investigate how a small amount of information can be efficiently conveyed to certain destinations.
Here the given distributions can be viewed as the global trend of the network, and the low-rate transmission of each node can be interpreted a nonsignificant action of an individual user. We employ mutual information in an attempt to quantify the amount of perturbation made by the users, as well as the low-rate transmission efficiency.

By employing the notion of mutual information, earlier works~\cite{BoradeZheng,AbbeZheng, HuangZheng} have made some progress towards understanding the optimal transmission strategy of users for certain networks. Specifically, Borade-Zheng~\cite{BoradeZheng} introduced a local geometric approach, based on an approximation of the Kullback-Leibler (KL) divergence, to develop a novel information-theoretic framework, and apply the framework to point-to-point channels and certain broadcast channels. Abbe-Zheng~\cite{AbbeZheng} employed the local geometric approach to address some interesting open questions in Gaussian networks. Huang-Zheng~\cite{HuangZheng} extended the framework to more general yet single-hop multi-terminal settings, and coined the \emph{linear information coupling (LIC) problems} for the associated problems (based on the framework) that will be reviewed in Section~\ref{sec:preliminaries}.
%In fact, this class of problems was already explored in our earlier work~\cite{HuangZheng} \textcolor{blue}{and previous works~\cite{BoradeZheng}\cite{AbbeZheng}}. The problems were called \emph{linear information coupling problems} therein and considered for several single-hop multi-terminal communication networks.

In particular~\cite{HuangZheng} developed an insightful interpretation. The key observation of~\cite{HuangZheng} is that under certain local assumptions, transmission of different types of messages, such as private and common messages, can be viewed as transmission through separated deterministic links with certain capacities. This viewpoint allows us to quantify the difficulty of broadcasting common
messages than sending private messages, as well as compute the gain of transmitting common messages. This development is particularly useful for multi-hop networks because it serves to characterize the trade-off between the gain of sending a common message and the cost that occurs in creating the common message from the previous layer.
%This trade-off relation enables us to evaluate whether or not a certain common message should be created.

In this work, we generalize the development into \emph{multi-hop} networks, thereby shedding some insights as to what kinds of common messages should be created in order to optimize the trade-off. Our contributions are two-fold. The first contribution is to extend the information theoretic framework in~\cite{BoradeZheng,AbbeZheng,HuangZheng} into multi-hop layered networks. Building upon this framework, we construct a deterministic network model that allows us to quantify efficiency of transmitting different types of private and common messages in the networks. This deterministic model enables us to translate the LIC problems into linear optimization problems, in which the solutions suggest what kind of common messages should be generated to optimize the throughput. With this deterministic model, we also develop an optimal local strategy for a large-scaled layered network having identical channel parameters for each layer. Specifically we demonstrate that the optimal strategy is composed of a few fundamental communication modes (to be specified in Section~\ref{sec:multi-layer}). In general, our results provide the insights of how users in a communication network should cooperate with each other to increase the efficiency of transmitting information through the network.

The second contribution of this paper is that we further generalize the framework into networks with \emph{feedback}, thereby exploring the role of feedback in multi-hop layered networks. Specifically, we consider the same layered networks but additionally include feedback links from each node to the nodes of the preceding layers. For these networks, we develop the best transmission strategy of each node that maximizes transmission efficiency.
The key technique employed here relies on our new development on network equivalence, saying that the layer-by-layer feedback strategy, which allows feedback only for the nodes in the immediately-preceding layer, yields the same performance as in the most idealistic one, where feedback is available to the entire nodes in all the preceding layers.
Moreover, we identify a variety of network scenarios in which feedback can strictly improve transmission efficiency. Our deterministic model allows us to have a deeper understanding on the nature of feedback gain: feedback offers better information routing paths, thereby making the gain of transmitting common messages effectively larger. This feedback gain is shown to be multiplicative, which is qualitatively similar to the gain in the two-user Gaussian interference channel~\cite{SuhTse}.

The rest of this paper is organized as follows. In Section~\ref{sec:preliminaries}, we review the LIC problems developed in the context of certain single-hop multi-terminal networks~\cite{BoradeZheng, AbbeZheng, HuangZheng}. The results in Section~\ref{sec:preliminaries} lead to a new type of deterministic model, which is presented in Section~\ref{sec:deterministicmodel}. In Section~\ref{sec:nofeedback}, we apply the framework to the interference channel, constructing a corresponding deterministic model. In Section~\ref{sec:MLN}, we extend this deterministic model to multi-hop layered networks, thus developing the best transmission strategy that maximizes transmission efficiency. In Section~\ref{sec:feedback}, we explore the role of feedback for multi-hop layered networks and conclude the paper with discussions in Sections~\ref{sec:extension} and~\ref{sec:conclusion}.

\section{Linear Information Coupling Problems}
\label{sec:preliminaries}

This section is dedicated to a brief review of the linear information coupling (LIC) problems which are formulated based on the local geometric approach in~\cite{BoradeZheng, AbbeZheng, HuangZheng}. Here we will summarize the local geometric approach and its application to  point-to-point channels, broadcast channels, and multiple-access channels.
%The applied problems are so called the linear information coupling problems in~\cite{HuangZheng}, and explored in depth therein.
%The local geometric approach is the key fundamental tool that will be applied in our work.
%The main contribution of this paper is to extend this approach to a more general network setting: multi-layer networks and the networks with feedbacks. See~\cite{HuangZheng, BoradeZheng, AbbeZheng} for the significance of the linear information coupling problems and the detailed justification of the problem formulation.

%In this section, we briefly review the local geometric approach in~\cite{HuangZheng} and~\cite{BoradeZheng}.
In general, the LIC problems are represented in the multi-letter form. However, Huang-Zheng~\cite{HuangZheng} took the following two steps to translate them into much simpler problems: (i) translating information theory problems to linear-algebra problems, and (ii) single-letterization. In this paper, we will focus on the first step, while referring readers to~\cite{HuangZheng} for details on the single-letterization step\footnote{In general, the single-letter version is not equivalent to the corresponding multi-letter one for arbitrary networks, e.g., general $K$-user BCs. However, it is shown in~\cite{HuangZheng} that there always exist optimal finite-letter solutions. Note that our approach in this paper for solving the single-letter problems can be easily extended to their finite-letter versions, so we will consider only the single-letter problems.}.

\subsection{The Local Approximation of the Kullback-Leibler Divergence}

The key idea of the local geometric approach lies on an approximation of the Kullback-Leibler (KL) divergence ~\cite{BoradeZheng}. Let $P$ and $Q$ be two probability distributions over the same alphabet $\cal X$. We assume that $Q$ and $P$ are close to each other, i.e., $Q(x) = P(x) + \epsilon \cdot J(x)$, for some small quantity $\epsilon$. Then, using the second order Taylor expansion, the KL divergence can be written as
\begin{align}
\notag
D(P \| Q) &= \sum_{x \in {\cal X}} P(x) \ln \frac{P(x)}{Q(x)} \\ \notag
&= - \sum_{x \in {\cal X}} P(x) \ln \left( 1 + \epsilon \cdot \frac{J(x)}{P(x)} \right )  \\ \notag
&= \frac{1}{2} \epsilon^2 \cdot \sum_{x} \frac{J^2(x)}{P(x)} + o(\epsilon^2) \\ \label{eq:local_app}
&= \frac{1}{2} \epsilon^2 \cdot \| L \|^2 + o(\epsilon^2),
\end{align}
where $L = [\sqrt{P}^{-1} ] J$, and $[ \sqrt{P}^{-1} ]$ is the diagonal matrix with entries $\{ \sqrt{P(x)}^{-1}, \ x \in \cX \}$. Note that replacing $[\sqrt{P}^{-1} ]$ with $[\sqrt{Q}^{-1} ]$ in the above Euclidean norm results in only the difference of order $o(\epsilon^2)$. Hence, $D(P||Q)$ and $D(Q||P)$ are considered to be equal up to the first order approximation. From this approximation, the divergence can be viewed as the (weighted) squared Euclidean norm between two distributions. In the rest of this section, we demonstrate how this local approximation technique can be used to translate information theory problems into linear algebra problems.

\subsection{Point-to-point Channels}
\label{sec:prel_PTP}

In this section, we will first review the formulation of LIC problem in a simple context of point-to-point channels, and then explain how the local geometric approach serves to translate it into a simple linear-algebra problem. Consider a point-to-point channel with input $X \in {\cal X}$, output $Y \in {\cal Y}$, and the channel matrix $W$ associated with the channel transition probability $P_{Y|X}$.
%aims to study the efficiency of transmitting a small amount of information through a channel. For a point-to-point channel with input $X$, output $Y$, and the channel matrix $W$,
Given some input distribution $P_X$, the LIC problem is formulated as:
\begin{equation}\label{eq:LICP_multi-letter}
\max_{U \rightarrow X \rightarrow Y :  I(U;X) \leq \frac{1}{2} \epsilon^2} I(U;Y),
\end{equation}
where $\epsilon$ is assumed to be small. The LIC problem aims at exploring the optimal transmission strategy of each node that wishes to send a small amount of message to certain destination(s) in networks. In the point-to-point setting, the following interpretation makes a connection between the above problem and the goal. Let us view $U$ as a message that the transmitter wants to send. One can then interpret $I(U;X)$ as the transmission rate of information modulated in $X$, and $I(U;Y)$ as the data rate of information that is transferred to the receiver. Unlike classical communication problems, the LIC problem targets a setting in which the amount of information is small. This is captured by the above assumption that $\epsilon$ is sufficiently small. In addition, it is assumed that\footnote{The assumption of small $I(U;X)$ does not necessarily imply $P_{X|U=u}$'s are close to $P_X$. See~\cite{Nair13, Nair14}. However, the extra assumption that $P_{X|U=u}$'s are close to $P_X$ leads to a geometric structure in the distribution spaces, which allow us to solve general network information theory problems in a systematic way. See~\cite{HuangZheng} for details. In the rest of this paper, we will employ this extra assumption and develop the geometric structure for general networks.} for all $u$ and $x$, $P_{X|U=u} (x) - P_X (x) = o(\epsilon)$. See~\cite{BoradeZheng,HuangZheng} for more detailed discussions and justifications of this formulation.

The goal of~\eqref{eq:LICP_multi-letter} is to design $P_{X|U=u}$ for different $u$, such that the marginal distribution is fixed as $P_X$, and~\eqref{eq:LICP_multi-letter} is optimized. To solve this problem, first observe that we can write the constraint as
\begin{equation}\label{eq:Px_constraint}
I(U;X) = \sum_{u} P_U (u) \cdot D(P_{X|U} (\cdot | u) \| P_X ) \leq \frac{1}{2} \epsilon^2.
\end{equation}
Thus, if we write $P_{X|U=u}$ as a local perturbation from $P_X$, i.e., $P_{X|U=u} = P_X + \epsilon \cdot J_u$, and employ the notation $L_u = [ \sqrt{P_X}^{-1} ] \cdot J_u$, then we can simplify the constraint~\eqref{eq:Px_constraint} by the local approximation~\eqref{eq:local_app} as
\begin{align*}
\sum_{u} P_U(u) \cdot \| L_u \|^2 \leq 1.
\end{align*}
%This implies that for each value of $u$, the conditional distribution $P_{X|U=u}$ is a local perturbation from $P_X$, that is, $P_{X|U=u} = P_X + \epsilon \cdot J_u$. Next, using the notation $L_u = \left[ \sqrt{P_X}^{-1} \right] J_u$, for each value of $u$, and
Moreover, note that $U \rightarrow X \rightarrow Y$ forms a Markov relation, we have
\begin{align}
\notag
P_{Y|U=u}
= W P_{X|U=u} = WP_X + \epsilon \cdot W J_u = P_Y + \epsilon \cdot W [\sqrt{P_X}] L_u,
\end{align}
where the channel applied to the input distribution is simply viewed as the channel transition matrix $W$, of dimension $| \cY | \times | \cX |$, multiplying the input distribution as a vector.

Then, using the local approximation~\eqref{eq:local_app}, the linear information coupling problem~\eqref{eq:LICP_multi-letter} becomes a linear algebra problem:
\begin{align} \label{eq:P2P_LICP}
\max \ &\sum_{u} P_U(u) \cdot \left\| \left[\sqrt{P_Y}^{-1} \right] W \left[\sqrt{P_X} \right] \cdot L_u \right\|^2, \\ \label{eq:P2P_LICP_2}
\mbox{subject to:} \ & \sum_{u} P_U(u) \cdot \| L_u \|^2 \leq 1, \ \sum_x \sqrt{P_X(x)} L_{u} (x) = 0.
\end{align}
where the second constraint of~\eqref{eq:P2P_LICP_2} comes from
\begin{align*} %\label{aaa}
 \sum_x \sqrt{P_X(x)} L_{u} (x) = \frac{1}{\epsilon} \sum_x (P_{X|U=u}(x) - P_X (x)) = 0.
\end{align*}
Here, we denote $B= [\sqrt{P_Y}^{-1} ] W [\sqrt{P_X} ]$ and call it the \emph{divergence transition matrix} (DTM).
%Note that the choice of $|{\cal U}|$ and the design of $P_U (u)$ \blue{have} no effect in maximizing \eqref{eq:P2P_LICP}, so we can simply choose $U$ as a uniformly distributed binary random variable, and further reduce the problem to:
Note that in both \eqref{eq:P2P_LICP} and \eqref{eq:P2P_LICP_2} the same set of weights $P_U(u)$ are used, thus the problem can be reduced to finding a direction of $L^*$, which maximizes the ratio $\| B L^*\| / \| L^* \|$, and the optimal choice of $L_u$ should be along the direction of this $L^*$ for every $u$. By linearity of the problem, scaling $L_u$ along this direction has no effect on the result. Thus, we can without loss of optimality choose $U$ as a uniformly distributed binary random variable, and further reduce the problem to:
\begin{align}
%\begin{split}
\label{eq:PTP_ApproxmiatedP}
\max_{L_{u} : \ \| L_u \|^2 \leq 1 , \ L_u \bot \sqrt{P_X}} \| B L_u \|^2,
\end{align}
where $\sqrt{P_X}$ represents a $|\cX|$-dimensional vector with entries $\sqrt{P_X(x)}$.

In order to solve~\eqref{eq:PTP_ApproxmiatedP}, we shall find $L_u$ as the right singular vector of $B$ with the largest singular value. However, the largest singular value of $B$ is $1$ with the right and left singular vectors $\sqrt{P_X}$ and $\sqrt{P_Y}$, and choosing $L_u$ as $\sqrt{P_X}$ violates the constraint $L_u \bot \sqrt{P_X}$. On the other hand, the rest right singular vectors of $B$ are orthogonal to $\sqrt{P_X}$, satisfying the constraint $L_u \bot \sqrt{P_X}$. Therefore, the optimal solution $L_u^*$ must be the right-singular vector with the \emph{second largest} singular value, and the corresponding maximum information rate is
\begin{align*}
\max || B L_u ||^2 = \sigma_{\sf smax}^2(B) =: \sigma^2.
\end{align*}
Here $\sigma_{\sf smax}(B)$ denotes the second largest singular value of $B$, which we define as $\sigma$. This shows that the problem is reduced to a simple linear-algebra problem of finding the fundamental direction $L_u^*$ that maximizes the amount of information $I(U;Y)$ that flows into the receiver.

\begin{example} Consider a quaternary-input binary-output point-to-point channel:
\begin{align*}
Y = \left\{
      \begin{array}{ll}
        X \oplus Z_1, & \hbox{$X \in \{ 0,1 \}$;} \\
        (X \mod 2) \oplus Z_2, & \hbox{$X \in \{2,3 \}$,}
      \end{array}
    \right.
\end{align*}
where $Z_1 \sim \sf Bern(\frac{1}{2})$ and $Z_2 \sim \sf Bern(\alpha)$. The probability transition matrix is then computed as
\begin{align*}
W = \left[
      \begin{array}{cccc}
        \frac{1}{2} & \frac{1}{2} & 1-\alpha & \alpha \\
        \frac{1}{2} & \frac{1}{2} &\alpha   & 1-\alpha \\
      \end{array}
    \right].
\end{align*}
Suppose that $P_X$ is fixed as $[\frac{1}{4}, \frac{1}{4}, \frac{1}{4}, \frac{1}{4}]^T$.
We can then compute $P_Y = W P_X = [\frac{1}{2}, \frac{1}{2}]^T$ and $B=\frac{\sqrt{2}}{2}W$. A simple computation gives:
\begin{align*}
L_u^* = \frac{1}{\sqrt{2}} [ 0, 0, 1, -1]^T, \; \sigma^2 = ||B L_u^* ||^2 = \frac{1}{2} (1- 2\alpha)^2.
\end{align*}
This solution is intuitive. Note that when $X \in \{0,1 \}$, it passes through a zero-capacity channel with $Z_1 \sim {\sf Bern} (\frac{1}{2})$. On the other hand, when $X \in \{ 2,3 \}$, the channel is a binary symmetric channel with $\alpha$. Therefore, information can be transferred only when $X \in \{2,3\}$, which matches the solution of $L_u^*$ as above. Note that $L_u^*$ contains non-zero elements only for the third and fourth entries corresponding to $X=2$ and $X=3$ respectively. When $\alpha \approx \frac{1}{2}$, the channel w.r.t $X \in \{2,3\}$ is very noisy. As $\alpha$ is far away from $\frac{1}{2}$, however, the channel is less noisy, thus delivering more information. This is reflected in the form of $\sigma^2$ as above. $\square$
\end{example}

\subsection{Broadcast Channels}
\label{sec:prel_BC}

Now, let us consider the LIC problem of broadcast channels. Suppose that a two-receiver discrete memoryless broadcast channel with input $X \in {\cal X}$ and two outputs $(Y_1,Y_2) \in {\cal Y}_1 \times {\cal Y}_2$, is specified by the memoryless channel matrices $W_1$ and $W_2$. These channel matrices specify the conditional distributions of the output signals at two receivers as $W_k ( y_k | x ) = P_{Y_k|X} ( y_k | x )$ for $k = 1,2$.
Let $U_0$ be a common message intended for both receivers; and $U_1$, $U_2$ be private messages intended for receivers 1 and 2 respectively. Assume that $(U_0,U_1,U_2)$ are mutually independent and $P_X$ is fixed. Let $(R_1, R_2, R_0)$ be the corresponding information rates.

For this setting, the LIC problem is formulated as the one that maximizes a rate region ${\cal R}_{\sf BC}$ such that
\begin{align} \label{eq:BC_capacity_region}
\begin{split}
%\label{eq:BC_OrigionalP}
R_1 &\leq  I(U_1; Y_1), \; R_2 \leq I(U_2; Y_2), \\
R_0 &\leq  \min \{ I(U_0;Y_1), I(U_0; Y_2) \},
\end{split}
\end{align}
under the locality assumption of
\begin{align*}
\begin{split}
&I(U_1; X) \leq \eone,\; I(U_2; X) \leq \etwo, \\
&I(U_0;X) \leq \ezero, \; \epsilon_1^2 + \epsilon_2^2 + \epsilon_0^2  = \epsilon^2.
\end{split}
\end{align*}
Here $( U_0 , U_1 , U_2 ) \rightarrow X \rightarrow (Y_1, Y_2)$ forms a Markov relation and $\epsilon$ is assumed to be some small quantity.

While a natural extension of the point-to-point-channel locality assumption is $I(U_1,U_2,U_0; X) \leq \e$, it can be shown that~\cite{HuangZheng} the resultant rate region ${\cal R}_{\sf BC}$ with this assumption is the same as considering the above three separate assumptions instead. Note that $I(U_1,U_2,U_0; X) \leq \e$ captures the tradeoff between $(R_1,R_2,R_0)$ in an aggregated manner, thus making the optimization involved. On the other hand, under the separate assumptions, the tradeoff is captured only by $\epsilon_1^2 + \epsilon_2^2 + \epsilon_0^2  =: \epsilon^2 \ll 1$: given that $\epsilon^2$ is appropriately allocated to $(\epsilon_1^2, \epsilon_2^2, \epsilon_0^2)$, there is no tension between those rates. Hence, this simplification enables us to reduce the problem to three independent sub-problems: two are w.r.t. private messages $(U_1, U_2)$, and the last is w.r.t. the common message $U_0$.
%Using the assumption of $\epsilon^2  \ll 1$, one can also show that these two assumptions are equivalent up to the first order approximation.

%Now consider three sub-problems for private messages $U_1$, $U_2$, and the common message $U_0$.
The optimization problems w.r.t. the private messages are the same as in the point-to-point channel case: for $k=1,2,$
\begin{align*}
\max I(U_k;Y_k) = \frac{1}{2} \epsilon_k^2 \cdot \sigma_k^2 + o (\epsilon^2),
\end{align*}
where $\sigma_k = \sigma_{\sf smax} (B_k)$, and $B_k =[\sqrt{P_{Y_k}}^{-1}] W_k [\sqrt{P_{X}}]$.
Thus, the main focus here is the optimization of the common information rate. Suppose that $P_{X|U_0=u_0} = P_X + \epsilon \cdot J_{u_0} $, and $L_{u_0} = [\sqrt{P_{X}}^{-1} ]J_{u_0}$. Using similar arguments, we can then reduce the problem to:
%Define $L_{u_0}$ in the same manner as the point-to-point channel case, and $B_k:=[\sqrt{P_{Y_k}}^{-1}] W_k [\sqrt{P_{X}}]$ for $k = 1,2$, the problem for the common information becomes
\begin{align} \label{eq:BC_common}
\max_{L_{u_0} : \ \| L_{u_0} \|^2 \leq 1 , \ L_{u_0} \bot \sqrt{P_X}} \min \left\{  \| B_1 L_{u_0} \|^2,
\|  B_2 L_{u_0} \|^2 \right\}.
%& \; \textrm{s.t. } || L_{u_0} ||^2 \leq 1, \textrm{ and } L_{u_0} \bot \sqrt{P_X}.
\end{align}
Now, this problem is simply a finite dimensional convex optimization problem, which can be easily solved. Let $\sigma_0^2$ be the maximum value w.r.t. the $L_{u_0}^*$.

\begin{example}
\label{example:BC}
Consider a quaternary-input binary-outputs BC: for $k \in \{ 1,2 \}$,
\begin{align*}
Y_k = \left\{
      \begin{array}{ll}
        X \oplus Z_{k1}, & \hbox{$X \in \{ 0,1 \}$;} \\
        (X \mod 2) \oplus Z_{k2}, & \hbox{$X \in \{2,3 \}$,}
      \end{array}
    \right.
\end{align*}
where $Z_{11}, Z_{22} \sim \sf Bern(\frac{1}{2})$ and $Z_{12}, Z_{21} \sim \sf Bern(\alpha)$. The transition probability matrices are computed as \begin{align*}
&W_1 = \left[
      \begin{array}{cccc}
        \frac{1}{2} & \frac{1}{2} &1-\alpha & \alpha \\
        \frac{1}{2} & \frac{1}{2} &\alpha   & 1-\alpha \\
      \end{array}
    \right], \\
&W_2 = \left[
      \begin{array}{cccc}
        1-\alpha & \alpha & \frac{1}{2} & \frac{1}{2}  \\
        \alpha & 1 - \alpha & \frac{1}{2} &  \frac{1}{2} \\
      \end{array}
    \right].
\end{align*}
Suppose that $P_X$ is fixed as $[\frac{1}{4}, \frac{1}{4}, \frac{1}{4}, \frac{1}{4}]^T$. We can then get $P_{Y_1} = P_{Y_2 } = [\frac{1}{2}, \frac{1}{2}]^T$. This allows us to compute $B_k = \frac{\sqrt{2}}{2} W_k, (k=1,2)$. With a simple linear-algebra calculation, we obtain
\begin{align*}
&L_{u_1}^* = \frac{1}{\sqrt{2}} [0, 0, 1, -1]^T, \;
\sigma_1^2 = \frac{1}{2} (1- 2\alpha)^2; \\
&L_{u_2}^* = \frac{1}{\sqrt{2}} [1, -1, 0, 0]^T, \;
\sigma_2^2 = \frac{1}{2} (1- 2\alpha)^2; \\
&L_{u_0}^* = \frac{1}{2} [1, -1, -1, 1]^T,
\; \sigma_0^2 = \frac{1}{4} \left( 1- 2 \alpha \right)^2.
\end{align*}
Here, one can see the difficulty of delivering common message, as compared to private message transmission. Note that $\sigma_0^2$ is half of the $\sigma_1^2 (= \sigma_2^2)$. This example represents an extreme case where $\sigma_0^2$ is minimized for all possible channels having the same $\sigma_1$ and $\sigma_2$, and thus the gap between $\sigma_0$ and $\sigma_1 (=\sigma_2)$ is maximized. Note that $\sigma_0^2$ has a trivial lower bound. It must be greater than a naive transmission rate:
$\min \{\lambda \sigma_1^2, (1-\lambda) \sigma_2^2 \}$, which can be achieved by privately sending a message first to receiver 1 with the fraction $\lambda$ of time and later to receiver 2 with the remaining fraction $(1-\lambda)$ of time.
This naive rate can be maximized as:
\begin{align}
\label{eq:naivemaxrate}
\max_{0 \leq \lambda \leq 1} \min \{\lambda \sigma_1^2, (1-\lambda) \sigma_2^2 \} = \frac{ \sigma_1^2 \sigma_2^2}{\sigma_1^2 + \sigma_2^2 }.
\end{align}
In this example, this rate is maximized as $\frac{\sigma_1^2}{2}$, which coincides with $\sigma_0^2$. $\square$
\end{example}

\subsection{Multiple-access Channels}
\label{sec:prel_MAC}

Now, let us consider the LIC problem of multiple-access channels. Suppose that the multiple-access channel has two inputs $X_1 \in \cX_1$, $X_2 \in \cX_2$, and one output $Y \in \cY$. The memoryless channel is specified by the channel matrix $W$, where $W (y | x_1, x_2 ) = P_{Y|X_1,X_2} (y|x_1,x_2)$ is the conditional distribution of the output signals.
We want to communicate three messages $(U_1, U_2, U_0)$ to the receiver with rates $(R_1,R_2, R_0)$, where $U_1$ and $U_2$ are privately known by transmitter $1$ and $2$ respectively, and $U_0$ is the common source known to both transmitters. Then, the LIC problem for the MAC is formulated as the one that maximizes a rate region ${\cal R}_{\sf MAC}$:
\begin{align} \label{eq:MAC_capacity_region}
R_0 \leq  I(U_0;Y), \ R_1 \leq  I(U_1;Y), \ R_2 \leq  I(U_2;Y),
\end{align}
such that $U_0 \rightarrow (X_1,X_2) \rightarrow Y$, $U_1 \rightarrow X_1 \rightarrow Y$, $U_2 \rightarrow X_2 \rightarrow Y$, and the local constraints are:
\begin{align*}
\begin{split}
&I(U_1; X_1) \leq \eone, \ I(U_2; X_2) \leq \etwo,\\
&I(U_0;X_1,X_2) \leq \ezero,  \ \epsilon_1^2 + \epsilon_2^2 + \epsilon_0^2  = \epsilon^2.
\end{split}
\end{align*}
Again, $\epsilon$ is assumed to be some small quantity.

Define the DTMs $B_k =[\sqrt{P_{Y}}^{-1}] W_k [\sqrt{P_{X_k}}]$, for $k= 1,2$, where
$$W_k ( y | x_k ) = \sum_{x_{3-k} \in \cX_{3-k}} W (y|x_1, x_2) P_{X_{3-k}} (x_{3-k}).$$
Two optimization problems w.r.t. private messages are the same as in the point-to-point channel case: $\max I(U_k;Y) = \frac{1}{2} \epsilon_k^2 \sigma_k^2 + o (\epsilon^2)$ where  $\sigma_k = \sigma_{\sf smax}(B_k)$.

Now suppose that
\begin{align*}
 P_{X_i|U_0 = u_{0}}  = P_{X_i}  + \epsilon_{0} \cdot J_{i,u_0}.
\end{align*}
Since $X_1$ and $X_2$ are conditionally independent given $U_0$, we can write $P_{X_1 X_2|U_0 = u_0}$ as
\begin{align}
\notag
P_{X_1, X_2 | U_0 = u_0}
= P_{X_1 | U_0 = u_0} \otimes P_{X_2 | U_0 = u_0} = P_{X_1} \otimes P_{X_2} +  \epsilon_{0} \cdot  J_{1,u_0} \otimes P_{X_2} + \epsilon_{0} \cdot P_{X_1} \otimes J_{2,u_0} + O(\epsilon^2). \\ \notag
%&= \left( P_{\uX_1} + \sqrt{n} \epsilon_{0} J_{1,u} \right) \otimes \left( P_{\uX_2} + \sqrt{n} \epsilon_{0} J_{2,u} \right) \\ \notag
 %\notag \label{eq:dim}
%& \ \ \ ,
\end{align}
Then, the condition $I(U_0;X_1,X_2) \leq \ezero$ can be written as
$$\sum_{u_0} P_{U_0} (u_0) \cdot \left\| L_{u_0} \right\|^2 \leq 1,$$
where $L_{u_0} = \left[ [ \sqrt{P_{X_1}}^{-1} ] J_{1,u_0}^T \ \  [ \sqrt{P_{X_2}}^{-1} ] J_{2,u_0}^T \right]^T$. Moreover, we can write $P_{Y|U_0 = u_0}$ as
\begin{align*}
P_{Y | U_0 = u_0}
= W \cdot P_{X_1, X_2 | U_0 = u_0}
%&= W^{(n)} \cdot ( P_{X_1}^{(n)} \otimes P_{X_2}^{(n)} + \epsilon_{0} \cdot J_{1,u} \otimes P_{X_2}^{(n)}\\
%& \qquad \qquad + \epsilon_{0} \cdot P_{X_1}^{(n)} \otimes J_{2,u} ) + O(\epsilon^2) \\
= P_{Y} + \epsilon_{0} W_1 J_{1,u_0} + \epsilon_{0} W_2 J_{2,u_0} + O(\epsilon^2)
%&= P_{Y} + \epsilon_{0} \left[ W_1 \ W_2 \right] \cdot \left[
%\begin{array}{ll}
%J_{1,u} \\
%J_{2,u}
%\end{array} \right]
% + O(\epsilon^2),
%&= P_{Y}^{(n)} + \epsilon_0 W_{0,n} \cdot J_{u} + O(\epsilon^2),
\end{align*}
so $I(U_0;Y)$ can be written as
$$\sum_{u_0} P_{U_0} (u_0) \cdot \left\| B_0 L_{u_0} \right\|^2,$$
where $B_0 = [B_1 \ B_2]$. Therefore, the optimization problem w.r.t. the common message can be reduced to
\begin{align} \label{eq:MAC_common}
\max_{L_{u_0} : \| L_{u_0} \|^2 \leq 1}  \left\| B_0 L_{u_0} \right\|^2.
\end{align}
Observe that unlike the point-to-point channel case, the $L_{u_0}$ has to respect the constraint that the first $|\cX_1|$ entries of $L_{u_0}$ (an $|\cX_1|$-dimensional vector) is orthogonal to $\sqrt{P_{X_1}}$, and the last $|\cX_2|$ entries of $L_{u_0}$ is orthogonal to $\sqrt{P_{X_2}}$. Nevertheless it is shown in~\cite{HuangZheng} that the optimal $L_{u_0}$ in~\eqref{eq:MAC_common} is still the right singular vector of $B_0$ with the second largest singular value. Hence, the maximum of~\eqref{eq:MAC_common} is $\frac{1}{2} \epsilon_0^2 \sigma_0^2$ where $\sigma_0^2 =  \sigma_{\sf smax}^2 ([B_1 \; B_2])$.

\begin{example}
\label{example:MAC}
Consider a quaternary-inputs binary-output MAC with
\begin{align*}
&P(0|x_1x_2) = \left\{
                 \begin{array}{ll}
                   \frac{1}{3} ( 2- \alpha ), & \hbox{$x_1x_2=(00,01,02,10, 11,12)$;} \\
                   \alpha, & \hbox{$x_1x_2=(03, 13,23, 33)$;} \\
                   \frac{1}{3} (4 - 5 \alpha), & \hbox{$x_1x_2=(20,21,22,32)$;} \\
                   \frac{1}{3} ( -2 + 7 \alpha), & \hbox{$x_1x_2=(30,31)$,}
                 \end{array}
               \right. \\
&P(1|x_1x_2) = 1 - P(0|x_1x_2), \;\forall (x_1,x_2).
\end{align*}
Here we assume that $\frac{2}{7} \leq \alpha \leq \frac{5}{7}$, which allows us to have a valid probability distribution. Suppose that both $P_{X_1}$ and $P_{X_2}$ are fixed as $[\frac{1}{4}, \frac{1}{4}, \frac{1}{4}, \frac{1}{4}]^T$. The probability transition matrices are then given by
\begin{align*}
W_1 = W_2 = \left[
      \begin{array}{cccc}
        \frac{1}{2} & \frac{1}{2} & 1-\alpha & \alpha \\
        \frac{1}{2} & \frac{1}{2} & \alpha   & 1-\alpha \\
      \end{array}
    \right].
\end{align*}
We can then compute $B_1 = B_2 = \frac{\sqrt{2}}{2} W_1$.
Hence, we get the same $(L_{u_k}^*, \sigma_k^2)$ as that in Example~\ref{example:BC} for $k=1,2$. For $(L_{u_{0}}^{*},\sigma_{0}^2)$, we obtain
\begin{align*}
L_{u_{0}}^{*} = \frac{1}{2} [0,0,1,-1 | 0,0,1,-1]^T , \quad \sigma_0^2 = (1-2\alpha)^2.
\end{align*}

Here we can see a gain due to coherent combining of the transmitted signals. Notice that the common rate $\sigma_0^2$ is double the private rate $\sigma_1^2 = \sigma_2^2$. One can interpret this as a so-called beamforming gain that is widely used to indicate the coherent combining gain in the context of multi-antenna Gaussian channels. $\square$
\end{example}

\section{A New Deterministic Model}
\label{sec:deterministicmodel}

The local geometric framework in Section~\ref{sec:preliminaries} provides a systematic approach in exploring the LIC problems. It turns out that this approach allows us to abstract arbitrary communication networks with a few key parameters induced by the networks, thus developing a novel deterministic model. In this section, we construct deterministic models for the point-to-point, broadcast and multiple-access channels discussed in the preceding section, and will extend to more general communication networks in the following sections.

Prior to describing our model, we emphasize three distinguishing features of the model with a comparison to one popular deterministic model: the Avestimehr-Diggavi-Tse (ADT) model~\cite{Salman:IT11}.
\begin{itemize}
\item \emph{Target channels:} While the ADT model is intended for capturing key properties of wireless Gaussian channels, our model aims at arbitrary discrete-memoryless channels.
\item \emph{Approximation:} In the ADT model, approximation to Gaussian channels is accurate when links have high signal-to-noise ratios. On the other hand, our model relies upon the Euclidean approximation and hence it is accurate as long as the channels are assumed to be very noisy, i.e., $P_{X|U=u}$ being close to $P_X$. The locality assumption puts limitations to our model in approximating general not-very-noisy channels.
\item \emph{Signal interactions in the noise-limited regime:} The ADT model focuses on the interaction of transmitted signals rather than on background noises, thus well representing the interference-limited regime, where the noise power is negligible compared to signal powers. Our model, however, can well represent noise-limited regimes in which a beamforming gain often occurs. Moreover, even for very noisy channels, signal interactions can be captured in our model. This is a significant distinction with respect to the ADT model targeted for Gaussian channels. Note that for very noisy Gaussian channels, signal interactions are completely ignored as the channels are considered as multiple point-to-point links in the noise-limited regime.
\end{itemize}

\begin{remark}
\label{remark:ADTvsLIC}
%The advantage of the ADT model is that one can quantify how good the approximations are, and hence obtain finite gap bounds to the capacity regions for various problems. On the other hand, our model does not focus on characterizing the traditional capacity region. Instead we aim to investigate the efficiency of transmitting small amount of information across nodes in networks.
While our model does not well approximate not-very-noisy channels which often represent many realistic communication scenarios,
%Actually this is because here we aim to investigate the transmission efficiency of network nodes, not focus on the characterization of traditional capacities.
%So one may wonder if the model is useful in a typical communication network context.
it still plays a role in some realistic networks. One such example is a cognitive radio network in which secondary users wish to exchange their information while minimizing interference to the existing communication network for primary users.
%By modeling the encoding of the secondary users' signals as superposition coding to existing primary signals, the input/output distributions of the primary signals are slightly perturbed, and the decoder shall detect the perturbations w.r.t. different signals of secondary users. Consequently, the tradeoff between the communication rates of the secondary users and the interference to the existing communication network can be formulated as LIC problems.
By modeling the encoding of the secondary users' signals as superposition coding to existing primary signals, we can formulate an LIC problem that intends to characterize the tradeoff between the communication rates of the secondary users and the interference to the existing communication network.
%With our model, we can quantify the best transmission efficiency of secondary users in the network.
In Section~\ref{sec:LGA}, we will provide more detailed discussions on this, and also show the potential of our model to a wide range of other interesting applications beyond communications.
%Some detailed discussions of this example are presented in Section~\ref{sec:CRN}. Later we will also show the great potential of our model to a wide range of other interesting applications beyond communications. See Section~\ref{sec:LGA} for details.
%, the primary users are communicating over the channels with the transmitted signals. In the mean while, the secondary users want to exchange low rate informations without interfering the existing communication signals too much. From the information theoretic point of view, this can be modeled as encoding the information into the existing signals by perturbing the corresponding input distributions with a given amount, and our model consequently demonstrates the most efficient communication scheme for the secondary users. In addition, our model can also be applied to different scientific problems from the communication networks, and we present some of these applications in section~\ref{sec:LGA}.
\end{remark}

%Prior to describing our model, we shall emphasize three distinguishing features of the model with a comparison to one popular deterministic model: the Avestimehr-Diggavi-Tse (ADT) model~\cite{Salman:IT11}. The first feature comes from target channels. While the ADT model is intended for capturing key properties of wireless Gaussian channels, our model aims at arbitrary discrete-memoryless channels. Second, in the ADT model, approximation to Gaussian channels is accurate when links have high signal-to-noise ratios. On the other hand, our model relies upon the Euclidean approximation and hence it is accurate as long as the channels are assumed to be very noisy, i.e., $P_{X|U=u}$ being close to $P_X$. The last feature is with respect to capturing signal interactions of channels. The ADT model focuses on the interaction of transmitted signals rather than on background noises, thus well representing the interference-limited regime, where the noise power is negligible  compared to signal powers. Our model, however, can well represent noise-limited regimes in which a beamforming gain often occurs. Moreover, even for very noisy channels, signal interactions can be captured in our model. This is a significant distinction with respect to the ADT model targeted for Gaussian channels. Note that for very noisy Gaussian channels, signal interactions are completely ignored as the channels are considered as multiple point-to-point links in the noise-limited regime.

\emph{Notations}: For illustrative purpose, we shall use the following notations for the rest of this paper. Let $\delta$ and $\delta_k$ be $\frac{1}{2} \epsilon^2$ and $\frac{1}{2} \epsilon_k^2$ respectively. In fact, we assume that $\delta$ is a small value, as it allows us to exploit the local approximation to derive capacity regions. However, once the capacity regions are obtained, the $\delta$ acts only as a scaling factor. So for simplicity, we normalize the regions by replacing $\delta$ with 1.
In addition, in order to distinguish the local-approximation-based capacity region from the traditional one, we shall call it the \emph{linear information coupling (LIC) capacity region}. With slight abuse of notations, we will use the notation ${\cal C}$ (usually employed to indicate the conventional capacity region) to denote the LIC capacity region. We will also use the notation $C_{\sf sum}$ to indicate the LIC sum capacity.
% and employ a different notation  ${\cal \tilde{C}}$ to denote the corresponding region.

\subsection{Point-to-point Channels}

For a point-to-point channel, from Section II-A, the LIC capacity  is simply $I(U;X) \approx \delta \cdot \sigma^2$. This naturally leads us to model the point-to-point channel as a single bit-pipe with capacity $\sigma^2$. Here the quantity $\sigma^2$ can be computed simply as the second largest singular value of the DTM. Importantly, note that this deterministic model provides a general framework as it can abstract \emph{every} discrete-memoryless point-to-point channel with a single quantity $\sigma^2$.

\subsection{Broadcast Channels}

For a general broadcast channel, the LIC capacity region~\eqref{eq:BC_capacity_region} is derived as
\begin{align*}
{\cal C}_{\sf BC} = \bigcup_{\delta_1 + \delta_2 + \delta_0 \leq 1} \left \{ (R_1,R_2,R_0): R_k \leq \delta_k \sigma_k^2, k \in [0:2] \right \},
\end{align*}
where $\sigma_k$'s can be computed as in Section~\ref{sec:prel_BC}.
%\blue{Actually the notation ${\cal C}$ indicates the conventional capacity region. With slight abuse of notation, here we use the same notation ${\cal C}$ to denote the LIC capacity region.}
This simple formula of the region leads us to model a broadcast channel as three bit-pipes, each having capacity $\delta_k \sigma_i^2$. Unlike traditional wired networks, the capacities of these bit-pipes are flexible: $\delta_k \sigma_k^2$ can vary depending on different allocations of $(\delta_1,\delta_2,\delta_0)$ subject to $\delta_1 + \delta_2 + \delta_0 \leq 1$. Hence, the LIC capacity region is of the shape as shown in the right figure of Fig.~\ref{fig:BC}.

\begin{figure}[t]
\begin{center}
{\epsfig{figure=./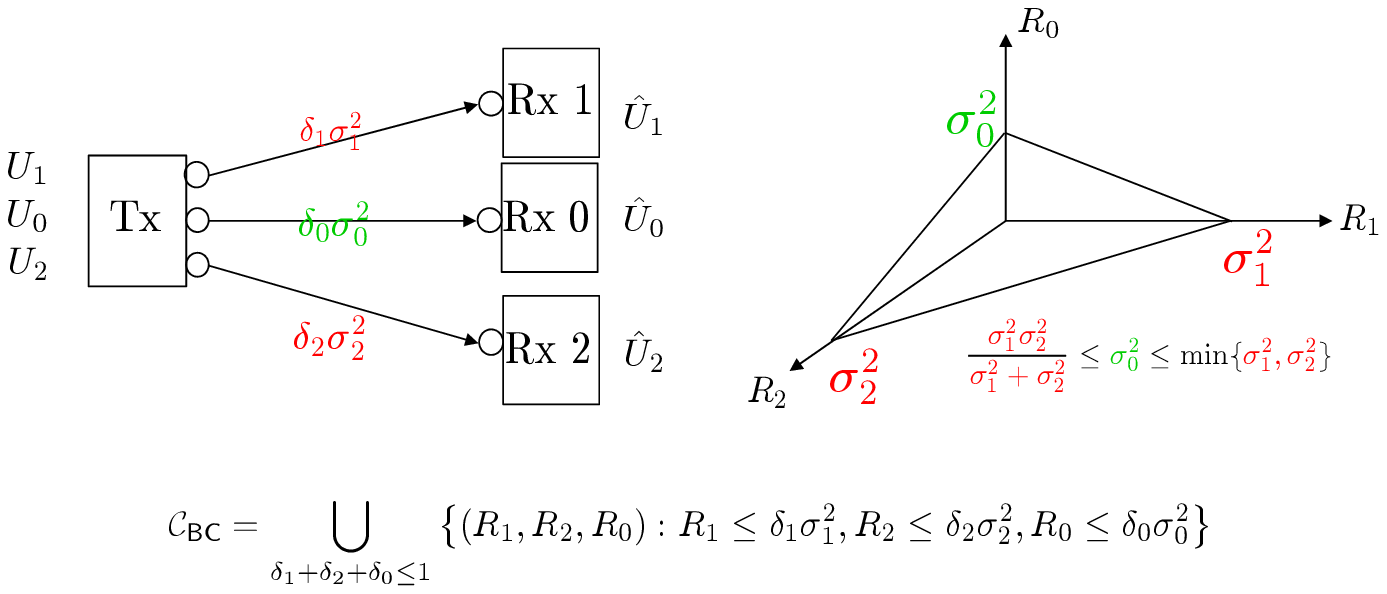, angle=0, width=0.7\textwidth}}
\end{center}
\caption{The bit-pipe deterministic model for discrete-memoryless broadcast channels. The LIC capacity region leads us to abstract a BC as a deterministic channel with three bit-pipes, each having the capacity of $\delta_k \sigma_k^2$. Note that the capacity $\delta_k \sigma_k^2$ can change depending on an allocation of $(\delta_1,\delta_2,\delta_0)$. Here we normalize the capacity region by $\delta$. Rx $k$ indicates a virtual terminal that decodes only $U_k$, for $k=0,1,2$. Hence, physical-Rx $k$ consists of virtual-Rx $k$ and virtual-Rx $0$, for $k=1,2$.
} \label{fig:BC}
\end{figure}

The left figure in Fig.~\ref{fig:BC} shows a pictorial representation of our deterministic model for discrete-memoryless broadcast channels. Here physical-Rx $k$ wishes to decode its private message $U_k$ as well as the common message $U_0$.
So we can represent physical-Rx $k$ by two \emph{virtual} receivers, say Rx $k$ and Rx $0$, which intend to decode $U_{k}$ and $U_0$ respectively. Employing the virtual receivers, we now model the broadcast channel with one transmitter and three receivers in which each receiver decodes its individual message.
Here the circles indicate bit-pipes intended for transmission of different messages. For instance, the top circle indicates a bit-pipe w.r.t. the $U_1$-message transmission. Note that different types of messages are delivered via parallel channels, identified by circles.

Another significant distinction w.r.t. the traditional wired network model is that channel parameters $(\sigma_1^2, \sigma_2^2, \sigma_0^2)$ have to respect the inequality that intrinsically comes from the structure of the broadcast channel:
\begin{align}
\label{eq:BClambda0}
\frac{ \sigma_1^2 \sigma_2^2}{\sigma_1^2 + \sigma_2^2 } \leq \sigma_0^2 \leq \min \{ \sigma_1^2, \sigma_2^2 \}.
\end{align}
Notice that the lower bound can be achieved as shown in Example~\ref{example:BC}.
This equality corresponds to the case, where the two optimal perturbation vectors for each of the two users are somehow orthogonal, and it is difficult to find a communication scheme that conveys much information to both receivers simultaneously. On the other hand, the equality of the upper bound holds  when the two optimal communication directions of two users are aligned with each other, so that one can design a perturbation vector that broadcasts information to both receivers efficiently. Moreover, the upper bound implies that common-message transmission requires more communication resources than private-message transmission does. Following the procedure in Section~\ref{sec:prel_BC}, one can explicitly computing $\sigma_k$'s, thus quantifying the cost difference between common-message and private-message transmissions.

In addition, in this deterministic model, the trade-off between $(R_1,R_2,R_0)$ can be well adjusted with $(\delta_1,\delta_2,\delta_0)$ subject to $\delta_1 + \delta_2 + \delta_0 \leq 1$. This trade-off can be precisely evaluated from $\mu$-sum-rate maximization, which can be carried out via a simple LP problem formulation as follows:
\begin{align*}
\max \sum_{k=0}^{2} \mu_k \cdot (\delta_k \sigma_k^2): \;\; \textrm{s.t.} \sum_{k=0}^{2} \delta_k \leq 1.
\end{align*}
In the case of the sum-rate maximization where $\mu_k=1, \forall k$, we can get {$C_{\sf sum} = \max \{ \sigma_1^2,\sigma_2^2, \sigma_0^2 \} = \max \{ \sigma_1^2,\sigma_2^2 \}$. Here we have used~\eqref{eq:BClambda0}. This solution implies that common-message transmission is more expensive, and hence choosing a more capable link among private-message bit-pipes yields the maximum sum rate.

\subsection{Multiple-Access Channels}

\begin{figure}[t]
\begin{center}
{\epsfig{figure=./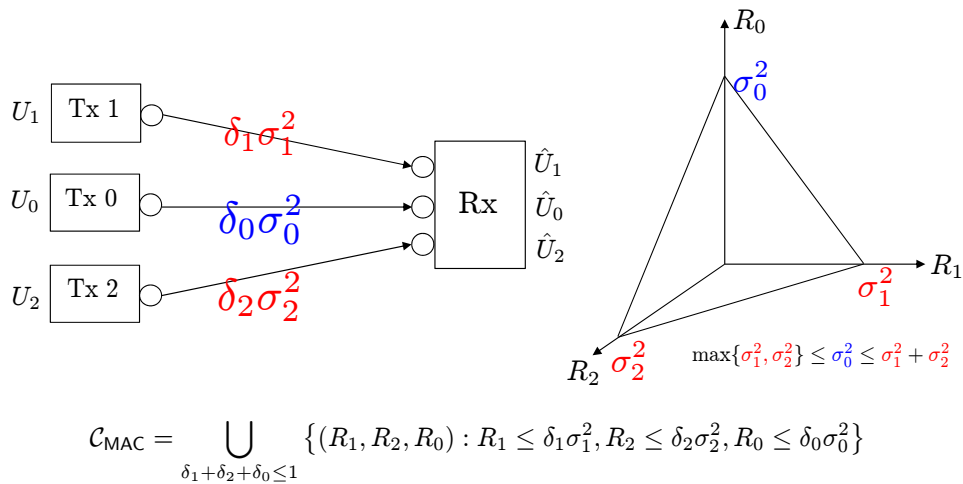, angle=0, width=0.7\textwidth}}
\end{center}
\caption{The bit-pipe deterministic model for multiple-access channels. A discrete-memoryless MAC can be modeled as three bit-pipes where the capacity of each bit-pipe is $\delta_k \sigma_k^2$. Unlike BCs, virtual transmitters are employed. Tx $k$ indicates a virtual terminal that sends only $U_k$, for $k=0,1,2$. Hence, physical-Tx $k$ consists of virtual-Tx $k$ and virtual-Tx $0$, for $k=1,2$.
} \label{fig:MAC}
\end{figure}

The LIC capacity region~\eqref{eq:MAC_capacity_region} for the multiple-access channel is derived as
\begin{align*}
{\cal {C}}_{\sf MAC} = \bigcup_{\delta_1 + \delta_2 + \delta_0 \leq 1} \left \{ (R_1,R_2,R_0): R_k \leq \delta_k \sigma_k^2, k \in [0,2] \right \},
\end{align*}
where $\sigma_k$'s can be computed as in Section~\ref{sec:prel_MAC}. Therefore, any discrete-memoryless MAC can be modeled as three bit-pipes, each having capacity $\delta_k \sigma_k^2$. See Fig.~\ref{fig:MAC}. Applying similar ideas as in the broadcast channel, we model physical-Tx $k$ by two virtual transmitters, say Tx $k$ and Tx $0$, which wishes to send the private message $U_{k}$ and the common message $U_{0}$ respectively. So the multiple access channel is modeled with three transmitters and one receiver.

Similarly, channel parameters $(\sigma_1^2, \sigma_2^2, \sigma_0^2)$ here should also satisfy the inequality that comes intrinsically from the MAC structure:
\begin{align}
\label{eq:MACsigma}
\max \{ \sigma_1^2, \sigma_2^2 \}  \leq \sigma_0^2 \leq \sigma_1^2 + \sigma_2^2.
\end{align}
The lower bound of~\eqref{eq:MACsigma} is straightforward. To see the upper bound,
notice that for any valid perturbation vector $L = [L_1^T \ L_2^T]^T$,
%with unit norm, since the matrix $B_0 = [B_1 \ B_2]$,
\begin{align*}
\left \| B_0 L
\right \|^2
%= \| B_1 L_1 + B_2 L_2 \|^2
\leq \left( \left \| B_1 L_{1} \right \| +  \left \| B_2 L_{2} \right \| \right)^2
\leq (\sigma_1 \| L_1 \| + \sigma_2 \| L_2 \|)^2
 \leq \sigma_1^2 + \sigma_2^2.
\end{align*}
Here the first inequality is due to the triangle inequality. The second inequality follows from the definition of $\sigma_1$ and $\sigma_2$: $\sigma_k$ denotes the second largest singular value of $B_k$, $k=1,2$. The third inequality comes from the Cauchy-Schwarz inequality and the unit-norm constraint: $ || L ||^2  = \| L_1 \|^2 +  \| L_2 \|^2 \leq 1$. Importantly, note that both transmitters share the knowledge of the common message, and hence they can cooperate each other in sending the common message efficiently. This is reflected in the upper bound of~\eqref{eq:MACsigma}, being interpreted as the coherent combining gain (or the beamforming gain).
%The coherent combing gain can also be quantified by simply computing the channel parameters $\sigma_i$'s as the singular values of DTM's. %The upper

Moreover, the trade-off between $(R_1,R_2,R_0)$ can be evaluated from $\mu$-sum-rate maximization. For example, the LIC sum capacity is given by $C_{\sf sum} = \max \{ \sigma_1^2,\sigma_2^2, \sigma_0^2 \} = \sigma_0^2$, obtained via maximizing the coherent combining gain.

Unlike the ADT model, our model can capture signal interactions even for non-negligible noisy channels. This is demonstrated through the following example.

\begin{example}
Consider a binary-inputs binary-output MAC with
\begin{align*}
&P(0|x_1x_2) = \left\{
                 \begin{array}{ll}
                   1-\alpha, & \hbox{$x_1x_2=(00,11)$;} \\
                   \alpha, & \hbox{$x_1x_2=(01, 10)$.}
                 \end{array}
               \right. \\
&P(1|x_1x_2) = 1 - P(0|x_1x_2), \;\forall (x_1,x_2).
\end{align*}
In fact, this is a binary addition channel:
\begin{align*}
Y = X_1 \oplus X_2 \oplus Z,
\end{align*}
where $Z \sim {\sf Bern} (\alpha)$. Suppose that both $P_{X_1}$ and $P_{X_2}$ are fixed as $[\frac{1}{2}, \frac{1}{2}]^T$. The probability transition matrices are then given by
\begin{align*}
W_1 = W_2 = \left[
      \begin{array}{cc}
        \frac{1}{2} & \frac{1}{2} \\
        \frac{1}{2} & \frac{1}{2} \\
      \end{array}
    \right].
\end{align*}
We can then compute $B_1 = B_2 = W_1$, thus yielding $\sigma_1^2=\sigma_2^2=\sigma_0^2=0$.

We now consider a different MAC where the above joint probability distribution is slightly changed as follows:
\begin{align*}
&P(0|x_1x_2) = \left\{
                 \begin{array}{ll}
                   1-\alpha, & \hbox{$x_1x_2=(00,{10})$;} \\
                   \alpha, & \hbox{$x_1x_2=(01, {11})$.}
                 \end{array}
               \right. \\
&P(1|x_1x_2) = 1 - P(0|x_1x_2), \;\forall (x_1,x_2).
\end{align*}
The only difference here is that the probabilities $P(y|10)$ and $P(y|11)$ are simply swapped each other. This simple change yields different values of $(\sigma_1^2, \sigma_2^2, \sigma_0^2)$. Note that in this case,
\begin{align*}
W_1  = \left[
      \begin{array}{cc}
        \frac{1}{2} & \frac{1}{2} \\
        \frac{1}{2} & \frac{1}{2} \\
      \end{array}
    \right], \; W_2 = \left[
      \begin{array}{cc}
        1-\alpha & \alpha \\
        \alpha &  1-\alpha\\
      \end{array}
    \right],
\end{align*}
thus yielding $(\sigma_1^2,\sigma_2^2,\sigma_0^2)=(0,\frac{(1-2\alpha)^2}{2},\frac{(1-2\alpha)^2}{2})$.
Therefore, we can see that even for non-negligible noisy channels, signal interactions are well captured in our model. $\square$
\end{example}

We now generalize this deterministic model to arbitrary discrete-memoryless networks.
Specifically we will first construct a deterministic model for  interference channels in Section~\ref{sec:nofeedback}, and then extend to more general networks in the following sections.

\section{Interference Channels}
\label{sec:nofeedback}

The quantifications of the channel parameters in~\eqref{eq:BClambda0} and~\eqref{eq:MACsigma} in Section~\ref{sec:deterministicmodel} shed significant insights into exploring transmission efficiency in more general networks. Specifically~\eqref{eq:MACsigma} suggests that common-message transmission in the MAC is more advantageous due to the coherent combining gain. This motivates us to create common messages as much as possible. On the other hand,~\eqref{eq:BClambda0} suggests that it consumes more network resources to generate such common messages than the private-message generation. Hence, there is a fundamental trade-off between the cost of generating common messages and the benefit from transmitting common messages. With the framework established in the previous sections, we now intend to investigate the trade-off relation, thereby optimizing communication rates of networks. To this end, we will first explore interference channels in this section.

For an interference channel with two transmitters and two receivers, there are 9 types of messages $U_{ij}$ where $i,j = 0,1,2$. Here $U_{ij}$ indicates a message from virtual-Tx $i$ to virtual-Rx $j$, $i, j \in [0:2]$. Note that $U_{i0}$ denote a common message (w.r.t. virtual-Tx $i$) intended for both receivers, while $U_{0j}$ indicates a common message (w.r.t. virtual-Rx $j$) accessible by both transmitters. Then, the LIC problem for the interference channel is the one that maximizes a rate region such that
\begin{align} \label{eq:IC_locality_assumption}
&R_{ij} \leq I(U_{ij}; Y_{j}), \ \forall i, \ j \neq 0, \\
&R_{i0} \leq \min \left\{ I(U_{i0}; Y_{1}), I(U_{i0}; Y_{2}) \right\} \  \forall i,
\end{align}
subject to the constraints:
\begin{align*}
\begin{split}
%\label{eq:IC_locality_assumption}
&I(U_{ij}; X_i) \leq \delta_{ij}, \ i \neq 0, \ \forall j, \\
&I(U_{0j}; X_1,X_2) \leq \delta_{0j}, \ \forall j, \\
&\sum_{i,j=0,1,2} \delta_{ij} = 1.
\end{split}
\end{align*}
Note that the constraints and the objective functions in the above are of the same mutual information forms as those in the BC and MAC problems in Section~\ref{sec:preliminaries}. Therefore, following the same local geometric approach,~\eqref{eq:IC_locality_assumption} can be reduced to
%Following the same procedure as in Section~\ref{sec:preliminaries},~\eqref{eq:IC_locality_assumption} can be reduced to
\begin{align} \label{eq:IC}
R_{ij}  \leq \delta_{ij} \sigma^2_{ij} , \ \mbox{for} \ i,j = 0,1,2, \ \sum_{i,j=0,1,2} \delta_{ij} \leq 1,
\end{align}
where $\sigma^2_{ij}$ indicates a channel parameter that quantifies the ability of the channel in transmitting $U_{ij}$, and can be computed in a similar manner as in Section~\ref{sec:preliminaries}:
\begin{align*}
\sigma_{ij}^2 = \left\{
                 \begin{array}{ll}
                   \sigma_{\sf smax}^2 (B_{ij}), & \hbox{$i \neq 0, j \neq 0$;} \\
                   \max_{{\bf v}_i} \min \{ || B_{i1} {\bf v}_i ||^2, || B_{i2} {\bf v}_i ||^2 \}, & \hbox{$i \neq 0, j=0$;} \\
                   \sigma_{\sf smax}^2 ([B_{1j} \; B_{2j}]), & \hbox{$i=0, j \neq 0$,} \\
                  \max_{\bf u} \min \left \{ || [B_{11} \;B_{21}] {\bf u} ||^2, || [B_{12} \;B_{22}] {\bf u} ||^2 \right \} & \hbox{$i = 0, j=0$.}
                  \end{array}
               \right.
\end{align*}
Here, $B_{ij}$ indicates the DTM with respect to the channel matrix $W_{Y_j|X_i}$ between transmitter $i$ and receiver $j$, and $({\bf v}_1, {\bf v}_2, {\bf u})$ are unit-norm vectors, such that ${\bf v}_1$ and the first $| \cX_1 |$ entries of ${\bf u}$ are orthogonal to $\sqrt{P_{X_1}}$, and ${\bf v}_2$ and the last $| \cX_2 |$ entries of ${\bf u}$ are orthogonal to $\sqrt{P_{X_2}}$. Consequently, the LIC capacity region of the interference channel is
\begin{align} \label{eq:CRIC}
{\cal C}_{\sf IC} = \bigcup_{ \sum_{ij}  \delta_{ij}   \leq 1
  } \left \{ (R_{11}, R_{10}, \cdots, R_{22}): R_{ij}  \leq \delta_{ij} \sigma_{ij}^2 \right \}.
\end{align}

\begin{figure}[t]
\begin{center}
{\epsfig{figure=./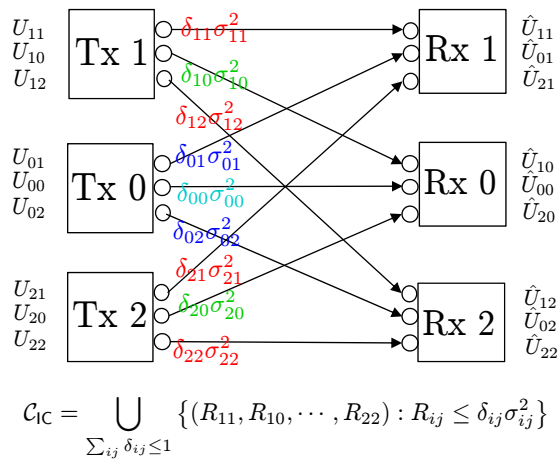, angle=0, width=0.4\textwidth}}
\end{center}
\caption{A deterministic model for interference channels. We consider the most general setting with 9 messages, denoted by $U_{ij}$'s, each indicting a message from virtual-Tx $i$ to virtual-Rx $j$. This IC can be modeled as 9 bit-pipes, each having the capacity of $\delta_{ij} \sigma_{ij}^2$, where $\delta_{ij}$ indicates the network resource assigned for transmitting $U_{ij}$.
} \label{fig:IC}
\end{figure}

From~\eqref{eq:CRIC}, we can now construct a deterministic model, applying the same idea as in the previous section. This deterministic model consists of flexible 9 bit-pipes, where the capacity of each bit-pipe is $\delta_{ij} \sigma_{ij}^2$, and can vary depending on different allocations of $\delta_{ij}$'s. An illustration of the deterministic model is shown in Fig.~\ref{fig:IC}. Note that the presented transmitters and receivers are virtual terminals, and the message $U_{ij}$ is transmitted from Tx $i$ to Rx $j$. Moreover, $\sigma_{ij}$'s should satisfy the inequalities similar to~\eqref{eq:BClambda0} and~\eqref{eq:MACsigma}:
%Similar to~\eqref{eq:BClambda0} in BCs, we can derive lower and upper bounds on $(\sigma_{10}^2,\sigma_{20}^2,\sigma_{00}^2)$. Similar to~\eqref{eq:MACsigma} in MACs, we can derive bounds on $(\sigma_{01}^2,\sigma_{02}^2)$. Thus we get the following relationships between $\sigma_{ij}^2$'s:
\begin{align}
\begin{split}
\label{eq:IC_parameters_relation}
&\frac{  \sigma_{11}^2 \sigma_{12}^2  }{ \sigma_{11}^2 + \sigma_{12}^2  } \leq  \sigma_{10}^2  \leq \min \{  \sigma_{11}^2, \sigma_{12}^2 \}   \\
&\frac{  \sigma_{21}^2 \sigma_{22}^2  }{  \sigma_{21}^2 + \sigma_{22}^2  } \leq  \sigma_{20}^2 \leq \min \{  \sigma_{21}^2, \sigma_{22}^2  \}   \\
&\frac{ \sigma_{01}^2 \sigma_{02}^2  }{ \sigma_{01}^2 + \sigma_{02}^2  } \leq  \sigma_{00}^2  \leq \min \{ \sigma_{01}^2, \sigma_{02}^2  \}   \\
& \max \{  \sigma_{11}^2, \sigma_{21}^2 \} \leq  \sigma_{01}^2  \leq  \sigma_{11}^2 + \sigma_{21}^2 \\
& \max \{ \sigma_{12}^2, \sigma_{22}^2 \} \leq  \sigma_{02}^2  \leq  \sigma_{12}^2 + \sigma_{22}^2,
\end{split}
\end{align}
which can be derived similarly as in the BC and MAC cases.

\begin{example}
\label{example:IC}
Consider a quaternary-inputs binary-outputs IC where $P(y_1|x_1x_2)$ is the same as that in Example~\ref{example:MAC}, but $P(y_2|x_1x_2)$ is different as
\begin{align*}
&P(0|x_1x_2) = \left\{
                 \begin{array}{ll}
                   \frac{1}{3} ( 2- \alpha ), & \hbox{$x_1x_2=(22,23,20,32, 33,30)$;} \\
                   \alpha, & \hbox{$x_1x_2=(21, 31,01, 11)$;} \\
                   \frac{1}{3} (4 - 5 \alpha), & \hbox{$x_1x_2=(02,03,00,10)$;} \\
                   \frac{1}{3} ( -2 + 7 \alpha), & \hbox{$x_1x_2=(12,13)$,}
                 \end{array}
               \right. \\
&P(1|x_1x_2) = 1 - P(0|x_1x_2), \;\forall (x_1,x_2).
\end{align*}
To have valid probability distributions, similarly we assume that $\frac{2}{7} \leq \alpha \leq \frac{5}{7}$. Suppose that both $P_{X_1}$ and $P_{X_2}$ are fixed as $[\frac{1}{4},\frac{1}{4},\frac{1}{4}, \frac{1}{4}]^T$. The probability transition matrix $W_{ij}$ w.r.t $P_{Y_j|X_i}$ is then computed as
\begin{align*}
&W_{11} = \left[
      \begin{array}{cccc}
        \frac{1}{2} & \frac{1}{2} &1-\alpha & \alpha \\
        \frac{1}{2} & \frac{1}{2} &\alpha   & 1-\alpha \\
      \end{array}
    \right], \\
&W_{21} = \left[
      \begin{array}{cccc}
        \frac{1}{2} & \frac{1}{2} &1-\alpha & \alpha \\
        \frac{1}{2} & \frac{1}{2} &\alpha   & 1-\alpha \\
      \end{array}
    \right], \\
&W_{12} = \left[
      \begin{array}{cccc}
        1-\alpha & \alpha & \frac{1}{2} & \frac{1}{2} \\
        \alpha &  1-\alpha & \frac{1}{2}& \frac{1}{2}  \\
      \end{array}
    \right], \\
&W_{22} = \left[
      \begin{array}{cccc}
        1-\alpha & \alpha & \frac{1}{2} & \frac{1}{2} \\
        \alpha &  1-\alpha & \frac{1}{2}& \frac{1}{2}  \\
      \end{array}
    \right].
\end{align*}
This gives $B_{ij} = \frac{\sqrt{2}}{\sqrt{3}} W_{ij}$. Performing similar computations as those in Examples~\ref{example:BC} and~\ref{example:MAC}, we can get
\begin{align*}
&\sigma_{11}^2=\sigma_{12}^2 = \sigma_{21}^2= \sigma_{22}^2 = \frac{1}{2} \left(1-2\alpha \right)^2, \\
&\sigma_{10}^2 = \sigma_{20}^2 = \frac{1}{4} \left(1-2\alpha \right)^2, \\
& \sigma_{01}^2 = \sigma_{02}^2 = \left( 1-2 \alpha \right )^2, \\
& \sigma_{00}^2 = \frac{1}{2} \left(1-2\alpha \right)^2.
\end{align*}
\end{example}
This example is an extreme case where sending Rx-common messages is the hardest as possible while sending Tx-common messages is the easiest due to the maximally-achieved beamforming gain. Note that $4 \sigma_{10}^2= 2 \sigma_{11}^2 = \sigma_{01}^2$, thus implying that $(\sigma_{10}^2, \sigma_{20}^2)$ achieve the lower bounds in~\eqref{eq:IC_parameters_relation}, while $(\sigma_{01}^2, \sigma_{02}^2)$ achieve the upper bounds in~\eqref{eq:IC_parameters_relation}. $\square$

In this deterministic model, the trade-off between the 9 message rate-tuples can be characterized by solving the LP problem for $\mu$-sum-rate maximization. In particular, the LIC sum capacity can be obtained as
\begin{align*}
C_{\sf sum} &= \max_{\sum {\delta}_{ij} \leq 1} \sum \delta_{ij} \sigma_{ij}^2 = \max_{i,j} \sigma_{ij}^2 \\
&= \max \{ \sigma_{01}^2, \sigma_{02}^2 \},
\end{align*}
where the last equality is due to~\eqref{eq:IC_parameters_relation}. %As one can see in~\eqref{eq:IC_parameters_relation}, the capacities of the bit-pipes for Tx-common messages are largest and thus the optimal strategy must be one of the two transmission patterns: (1) using the $\sigma_{01}^2$-capacity link; (2) using the $\sigma_{02}^2$-capacity link.
Therefore, to optimize the total throughput, we will just let either $\delta_{01}$ or $\delta_{02}$ be $1$, and deactivate other links. In other words, the optimal strategy is to transmit a common message accessible by both transmitters, maximizing the beamforming gain.

\section{Multi-hop Layered Networks} \label{sec:MLN}

Deterministic models of single-hop networks such as BCs, MACs and ICs do not well capture the trade-off between the cost of generating common messages and the benefit from sending common messages. In BCs, only the cost due to common-message generation is quantified, while in MACs, we can only investigate the benefit from common-message transmission. In ICs, an obvious solution to sum-rate maximization is to maximize the coherent combining gain which comes from common-message transmission.

On the other hand, in multi-hop layered networks, this tension can be well taken into consideration. Notice that a common message accessible by multiple transmitters in one layer must be generated from the previous layer. Hence, to optimize the throughput, one needs to compare the benefit from common-message transmission in one layer with the cost due to common-message generation in the preceding layer. Now one natural question that arises in this context is then: how do we plan which kinds of common messages should be generated in a given network to maximize the throughput? In this section, we will address this question.

\begin{figure*}
\begin{center}
{\epsfig{figure=./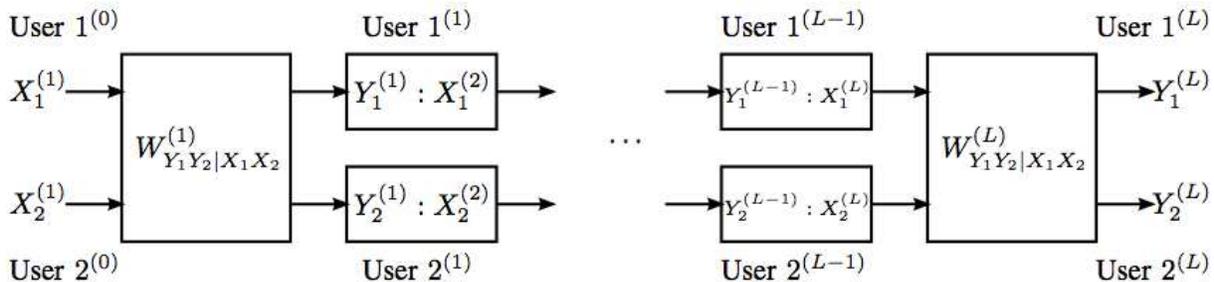, angle=0, width=0.9\textwidth}}
\end{center}
\caption{The $L$-layered network with two users in each layer. The super index ``$(\ell)$" denotes the $\ell$-th layer of the transmitters, receivers, and the users.
} \label{fig:L-network}
\end{figure*}

\begin{figure}[t]
\begin{center}
{\epsfig{figure=./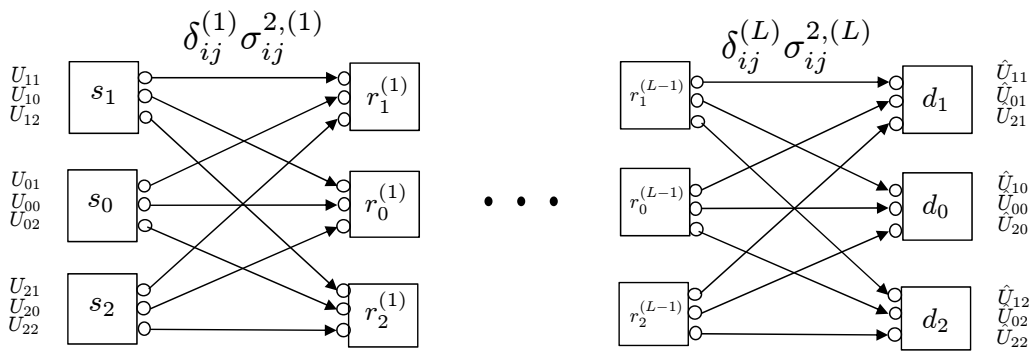, angle=0, width=0.75\textwidth}}
\end{center}
\caption{A deterministic model for multi-hop interference networks. We introduce a separation principle across layers. We abstract each layer as the bit-pipe deterministic model, and then constitute an entire network with concatenating these layers. Layer $\ell$ consists of 9 bit-pipes, each having the capacity of $\delta_{ij}^{(\ell)} \sigma_{ij}^{2,(\ell)}$, $i,j \in [0:2]$ and $\ell \in [1:L]$. Here $\sigma_{ij}^{2,(\ell)}$ represent the key parameters that characterize layer $\ell$'s channel.
} \label{fig:IC_Networks}
\end{figure}

For illustrative purpose,  we consider a general layered network with only two users in each layer, although our approach can be readily extended to more general cases at the expense of heavy notations. For the two-user $L$-layered network, the $\ell$-th layer is an interference channel with input symbols $\cX^{(\ell)}_1$, $\cX^{(\ell)}_2$, and output symbols $\cY^{(\ell)}_1$, $\cY^{(\ell)}_2$, and the channel matrix $W^{(\ell)}_{Y_1 Y_2|X_1 X_2}$. See Fig.~\ref{fig:L-network}.

%As shown in Figure~\ref{fig:L-network}, each user $i^{(\ell)}$ in the $\ell$-th intermediate layer of this network is composed of a receiver $Y^{(\ell)}_i$ and a transmitter $X^{(\ell+1)}_i$.

For simplification, we assume a decode-and-forward operation~\cite{Cover:it79} at each layer: part of messages are decoded at each layer and then these are forwarded to the next layer. With the decode-and-forward scheme, one can abstract each layer as a deterministic model like the one for an IC, and a concatenation of these layers will construct a  deterministic model of the multi-hop layered network. See Fig.~\ref{fig:IC_Networks}. Here, we denote by $s_i$ the virtual Tx $i$ in the first layer, and by $d_i$ the virtual Rx $i$ in the last layer. Denote by $r^{(\ell)}_i$ a node that can act as the virtual Tx $i$ and Rx $i$ in the $\ell$-th intermediate layer. In addition, the channel of layer $\ell$ consists of $9$ bit-pipes, each having the capacity of $\delta_{ij}^{(\ell)} \sigma_{ij}^{2,(\ell)}$, for $i,j = 0,1,2$, and $\ell \in [1:L]$, and the corresponding constraint for $\delta_{ij}$'s is:
\begin{align}
\label{eq:IC_Network_Locality}
\frac{1}{L} \sum_{\ell=1}^{L} \sum_{i=0}^{2} \sum_{j=0}^{2} \delta_{ij}^{(\ell)} \leq 1.
\end{align}
Here the constraint is normalized by the number of layers.

For simplicity, in this paper, we do not allow any mixing between distinct messages (network coding~\cite{ahlswede:it}), focusing on the routing capacity. Then, for each set of $\delta_{ij}^{(\ell)}$ that satisfies~\eqref{eq:IC_Network_Locality}, one can obtain a layered network with fixed capacity $\delta_{ij}^{(\ell)} \sigma_{ij}^{2,(\ell)}$ for each link $(i,j)$ in the $\ell$-th layer. This reduces to the traditional routing problem. Hence, we can characterize the LIC capacity region of the $9$ rate tuples by investigating achievable rate regions over all possible sets of $\delta_{ij}^{(\ell)}$ subject to~\eqref{eq:IC_Network_Locality}.

\begin{theorem}
\label{Theorem:IC_Networks}
Consider a two-source two-destination multi-hop layered network illustrated in Fig.~\ref{fig:IC_Networks}. Assume that 9 messages $U_{ij}$'s are mutually independent. Under the assumption of~\eqref{eq:IC_Network_Locality}, the LIC capacity region is
\begin{align*}
{\cal C}_{\sf LN} = \bigcup_{ \sum  \delta_{ij}   \leq L
  } \left \{ (R_{11}, R_{10}, \cdots, R_{22}): R_{ij}  \leq \delta_{ij} \sigma_{ij}^2 \right \},
\end{align*}
where
\begin{align} \label{eq:IC_Networks_}
\sigma_{ij} = \frac{1}{L} \max_{q \in [1:3^{L-1}] }M ( {\cal P}_{ij}^{(q)} ).
\end{align}
Here, ${\cal P}_{ij}^{(q)}$ denotes a set of the link capacities along the $q$-th path from virtual source $i$ to virtual destination $j$, and $M({\cal P}_{ij}^{(q)})$ denotes the harmonic mean of the elements in the set ${\cal P}_{ij}^{(q)}$.
\end{theorem}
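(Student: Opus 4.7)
The plan is to exploit the separation principle embodied in Fig.~\ref{fig:IC_Networks}: each layer is abstracted as nine independent bit-pipes with capacities $\delta^{(\ell)}_{ij}\sigma^{2,(\ell)}_{ij}$, and the per-layer constraints aggregate into the single global budget $\sum_{\ell,i,j}\delta^{(\ell)}_{ij}\le L$. Under the decode-and-forward, routing (no network coding) assumption, each message $U_{ij}$ must be carried across all $L$ layers by selecting one bit-pipe per layer, with adjacent bit-pipes sharing a virtual node index at the layer boundary. Parameterizing such a path by the sequence of virtual-node indices $(k^{(0)},k^{(1)},\ldots,k^{(L)})$ with $k^{(0)}=i$ and $k^{(L)}=j$ yields exactly $3^{L-1}$ admissible paths ${\cal P}_{ij}^{(q)}$; at layer $\ell$ the path occupies the bit-pipe $(k^{(\ell-1)},k^{(\ell)})$ with unit capacity $\sigma^{2,(\ell)}_{k^{(\ell-1)}k^{(\ell)}}$. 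Because the nine messages compete only through the global $\delta$-budget, the problem decomposes into nine independent single-message routing subproblems.

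First I would fix a message $U_{ij}$ and a single path ${\cal P}_{ij}^{(q)}$ and solve the resulting max-min allocation. With per-path budget $\delta_{ij}=\sum_\ell \delta^{(\ell)}$, the rate is bounded by $\min_\ell \delta^{(\ell)}\sigma^{2,(\ell)}_{k^{(\ell-1)}k^{(\ell)}}$, and a short KKT / water-filling argument shows the bound is tight precisely when every product $\delta^{(\ell)}\sigma^{2,(\ell)}$ equals a common value. Solving for that common value expresses it in terms of the harmonic mean $M({\cal P}_{ij}^{(q)})$ of the link capacities along the path, and maximizing over the $3^{L-1}$ choices of path recovers $\sigma_{ij}$ as stated in~\eqref{eq:IC_Networks_}.

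Next I would glue the per-message answers together. For achievability, pick the optimal path for each $U_{ij}$ and allocate along it as above; summing the per-message budgets saturates the global constraint $\sum_{i,j}\delta_{ij}\le L$, so every point of the claimed region ${\cal C}_{\sf LN}$ is realized. For the converse, I would take an arbitrary feasible $\{\delta^{(\ell)}_{ij}\}$, decompose each message's flow as a nonnegative combination of path flows via flow conservation at the intermediate virtual relays $r^{(\ell)}_k$, and use linearity of the per-path cost together with monotonicity of the single-path rate in its own budget to argue that consolidating flow onto the single best path never decreases that message's rate. The single-path analysis then supplies the harmonic-mean upper bound on $R_{ij}$.

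The main obstacle, I expect, is the converse step of reducing an arbitrary joint allocation to the stated product form $R_{ij}\le \delta_{ij}\sigma_{ij}^2$. One must verify that no inter-message coupling survives beyond the shared global $\delta$-budget, which rests on the fact that distinct messages occupy disjoint bit-pipe slots at each layer and that the no-network-coding restriction forbids any combining of distinct messages at the intermediate virtual relays. A subsidiary technicality is ruling out that multi-path splitting of a single message could outperform the best single path; this reduces to the elementary observation that maximizing $\sum_q M({\cal P}_{ij}^{(q)})\,\delta_{ij}^{(q)}$ subject to $\sum_q \delta_{ij}^{(q)}\le \delta_{ij}$ simply places all mass on the path with the largest harmonic mean, so the union over single-path strategies already exhausts the achievable region.
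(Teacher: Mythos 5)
Your proposal is correct and follows essentially the same route as the paper's proof: decouple the nine messages through the aggregated budget $\sum_{i,j}\delta_{ij}\le L$, enumerate the $3^{L-1}$ paths per message, and show that the optimal per-path allocation equalizes $\delta^{(\ell)}\sigma^{2,(\ell)}$ across links, yielding the normalized harmonic mean. The only difference is that you spell out the converse details (flow decomposition at intermediate relays and the linearity argument showing that splitting a message over several paths cannot beat the single best path) which the paper leaves implicit.
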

\begin{proof}
Unlike single-hop networks, in multi-hop networks, each link can be used for multiple purposes, i.e., $\delta_{ij}^{(\ell)}$ can be the sum of the network resources consumed for the multiple-message transmission. For conceptual simplicity, we introduce message-oriented notations $\delta_{ij}$'s, each indicating the sum of the $\delta_{ij}^{(\ell)}$'s which contribute to delivering the message $U_{ij}$. The constraint of $\sum {\delta_{ij}^{(\ell)}} \leq L$ then leads to $\sum \delta_{ij} \leq L$. Here the key observation is that the tradeoff between the 9-message rates is decided only by the constraint of $\sum \delta_{ij} \leq L$, i.e., given a fixed allocation of $\delta_{ij}$'s, the 9 sub-problems are independent with each other.

Now let us fix $\delta_{ij}$'s subject to the constraint, and consider the message $U_{ij}$. Since there are $3^{L-1}$ possible paths for transmission of this message, the problem is reduced to finding the most efficient path that maximizes $R_{ij}$, as well as finding a corresponding resource allocation for the links along the path. We illustrate the idea of solving this problem through an example in Fig.~\ref{fig:Multihop_Algorithm}. Consider the delivery of $U_{10}$. In the case of $L=2$, we have three possible paths $({\cal P}_{10}^{(1)}, {\cal P}_{10}^{(2)},{\cal P}_{10}^{(3)})$, identified by blue, red and green paths. The key point here is that the maximum rate for each path is simply a harmonic mean of the link capacities associated with the path, normalized by the number of layers. To see this, consider the top blue path ${\cal P}_{10}^{(1)}$ consisting of two links with capacities of $\sigma_{11}^{2,(1)}$ and $\sigma_{10}^{2,(2)}$, i.e., ${\cal P}_{10}^{(1)}  = \{\sigma_{11}^{2,(1)},\sigma_{10}^{2,(2)} \}$. Suppose that $\delta_{ij}$ is allocated such that the $\lambda$ fraction is assigned to the first link and the remaining $(1-\lambda)$ fraction is assigned to the second link. The rate is then computed as $\min \{ \lambda \sigma_{11}^{2,(1)}, (1-\lambda) \sigma_{10}^{2,(2)} \}$. Note that this can be maximized as $\frac{\sigma_{11}^{2,(1)} \sigma_{10}^{2,(2)} }{ \sigma_{11}^{2,(1)} + \sigma_{10}^{2,(2)} } = \frac{1}{2} M (\sigma_{11}^{2,(1)}, \sigma_{10}^{2,(2)} )$. Therefore, the maximum rate is
\begin{align*}
\sigma_{10}^{2} = \frac{1}{2} \max \left \{ M (\sigma_{11}^{2,(1)}, \sigma_{10}^{2,(2)}),  M (\sigma_{10}^{2,(1)}, \sigma_{00}^{2,(2)}), M (\sigma_{12}^{2,(1)}, \sigma_{20}^{2,(2)}) \right \}.
\end{align*}
We can easily  show that for an arbitrary $L$-layer case, the maximum rate for each path is the normalized harmonic mean. This completes the proof.
\end{proof}

\begin{figure}[t]
\begin{center}
{\epsfig{figure=./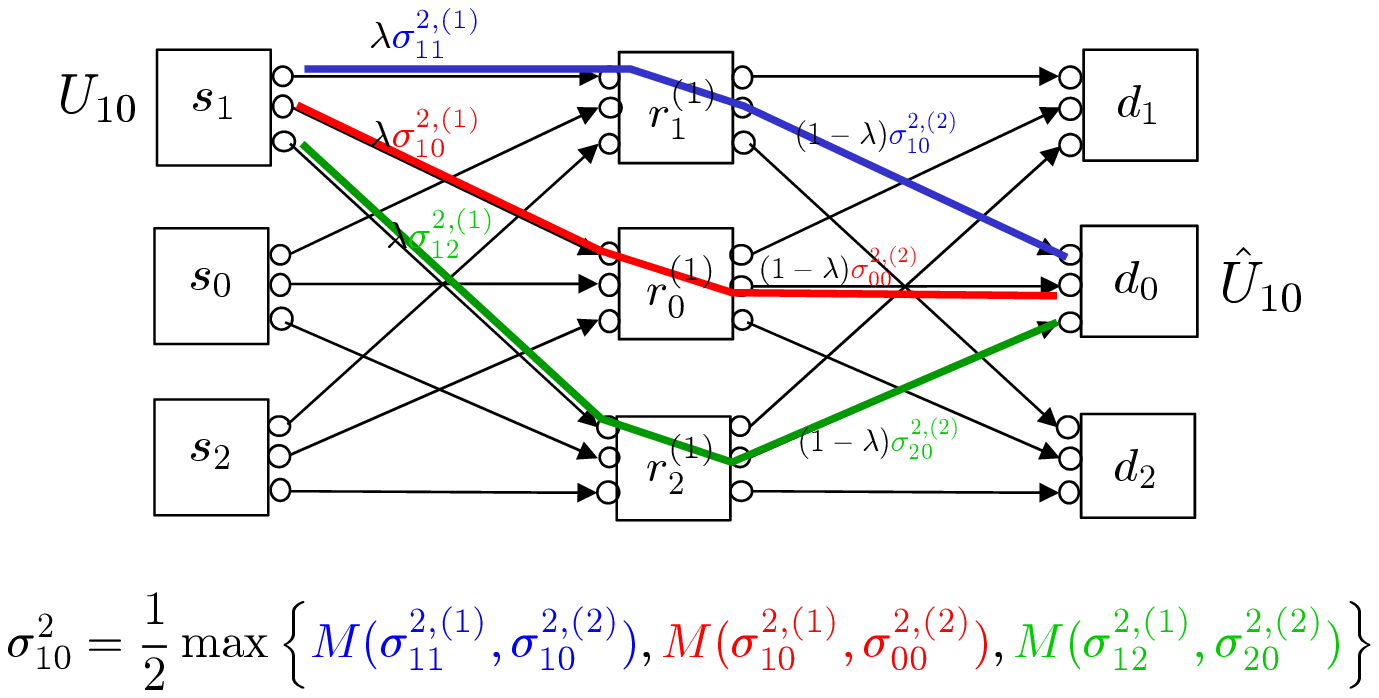, angle=0, width=0.6\textwidth}}
\end{center}
\caption{The maximum rate for $U_{10}$ when $L=2$. In this example, we have three possible paths for sending $U_{10}$ as shown in the figure. For each path, the maximum rate is computed as a harmonic mean of the link capacities along the path, normalized by the number of layers. Therefore, $\sigma_{10}^2$ is given as above.
} \label{fig:Multihop_Algorithm}
\end{figure}

\begin{remark}[Viterbi Algorithm]
Notice that the complexity for computing the LIC capacity region grows exponentially with the number of layers: $O(3^L)$. However, the Viterbi algorithm~\cite{Viterbi} allows us to reduce the complexity significantly. Note that~\eqref{eq:IC_Networks_} is equivalent to finding the path such that the inverse sum of $\sigma_{i_k i_{k+1}}^{2,(k)}$ is minimized. Taking $1/\sigma_{i_k i_{k+1}}^{2,(k)}$ as a cost, we can now apply the the Viterbi algorithm to find the path with minimal total cost, and hence the complexity is reduced to $O(L)$. $\square$
\end{remark}

In addition, Theorem~\ref{Theorem:IC_Networks} immediately provides the maximum throughput of this network as shown in the following Corollary.

\begin{corollary}
\label{Theorem:IC_Networks_sum}
Consider a layered network illustrated in Fig.~\ref{fig:IC_Networks}, the LIC sum capacity under the constraint~\eqref{eq:IC_Network_Locality} is
\begin{align} \label{eq:thm2}
C_{\sf sum} = \max_{i_1, i_2 , \ldots , i_{L+1} \in [0:2]} M(\sigma_{i_1i_2}^{2,(1)}, \sigma_{i_2i_3}^{2,(2)}, \ldots , \sigma_{i_Li_{L+1}}^{2,(L)}),
% \left \{ \sigma_{11}^2, \sigma_{00}^2, \sigma_{22}^2, \right. \\
%& \left. M ( \sigma_{10}^2, \sigma_{01}^2), M ( \sigma_{20}^2, \sigma_{02}^2 ), M (\sigma_{12}^2, \sigma_{21}^2 ) \right. \\
%& \left. M ( \sigma_{10}^2, \sigma_{02}^2, \sigma_{21}^2 ), M ( \sigma_{20}^2, \sigma_{01}^2, \sigma_{12}^2 ) \right \},
\end{align}
where $M(\cdot)$ denotes the harmonic mean.
%\begin{align*}
%M(x_1,\cdots,x_n) \triangleq \frac{n}{\frac{1}{x_1} + \cdots + \frac{1}{x_n}}.
%\end{align*}
%In other words, the optimal sum rate is achieved by choosing the path from a node
\end{corollary}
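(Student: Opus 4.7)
The plan is to derive the sum capacity directly by maximizing the linear objective $\sum_{ij} R_{ij}$ over the region $\mathcal{C}_{\sf LN}$ given by Theorem~\ref{Theorem:IC_Networks}, and then to unfold the definition of $\sigma_{ij}$ into an explicit maximization over source-to-destination paths through the layered network.

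First, I would note that by Theorem~\ref{Theorem:IC_Networks} the sum rate is upper bounded as $\sum_{ij} R_{ij} \leq \sum_{ij} \delta_{ij}\,\sigma_{ij}^{2}$ subject to $\sum_{ij}\delta_{ij}\leq L$. This is a linear program in the nonnegative variables $\{\delta_{ij}\}$ with fixed nonnegative coefficients $\sigma_{ij}^{2}$, so the optimum is attained at an extreme point: all of the budget $L$ is allocated to a single index $(i^{\star},j^{\star})$ achieving $\max_{ij}\sigma_{ij}^{2}$. Hence $C_{\sf sum} = L\cdot\max_{ij}\sigma_{ij}^{2}$, and this is achieved constructively by choosing the best virtual source-destination pair and setting all other $\delta_{ij}$'s to zero.

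Second, I would substitute the path characterization of $\sigma_{ij}$ from~\eqref{eq:IC_Networks_} in Theorem~\ref{Theorem:IC_Networks}. Reading the proof of that theorem, the quantity $L\cdot\sigma_{ij}^{2}$ is exactly $\max_{q} M(\mathcal{P}_{ij}^{(q)})$, i.e., the maximum harmonic mean of per-layer link capacities taken over the $3^{L-1}$ routing paths from virtual source $i$ to virtual destination $j$. Substituting this into the preceding step gives
\begin{equation*}
C_{\sf sum} \;=\; \max_{i,j\in[0:2]}\;\max_{q\in[1:3^{L-1}]} M\bigl(\mathcal{P}_{ij}^{(q)}\bigr).
\end{equation*}

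Third, I would merge the two outer maxima into a single maximization over all $(L+1)$-tuples $(i_{1},i_{2},\ldots,i_{L+1})\in[0:2]^{L+1}$: the pair $(i_{1},i_{L+1})$ plays the role of $(i,j)$ (virtual source and virtual destination), while the interior indices $i_{2},\ldots,i_{L}$ enumerate the $3^{L-1}$ intermediate-node choices that specify a path $q$. Each such tuple picks out one link per layer with capacity $\sigma_{i_{\ell}i_{\ell+1}}^{2,(\ell)}$, and the maximum achievable rate along this path is the normalized harmonic mean, so the combined maximum yields exactly the expression in~\eqref{eq:thm2}. The only real care needed is notational bookkeeping with the factor $L$ and the correspondence between $(i,j,q)$ and $(i_{1},\ldots,i_{L+1})$; nothing analytically new beyond Theorem~\ref{Theorem:IC_Networks} is required, which is why this is stated as a corollary.
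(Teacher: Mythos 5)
Your proposal is correct and matches the paper's (implicit) argument: the paper treats the corollary as immediate from Theorem~\ref{Theorem:IC_Networks}, and your route --- solving the LP $\max\sum_{ij}\delta_{ij}\sigma_{ij}^2$ subject to $\sum_{ij}\delta_{ij}\le L$ to get $L\max_{ij}\sigma_{ij}^2$, then substituting $\sigma_{ij}^2=\tfrac{1}{L}\max_q M(\mathcal{P}_{ij}^{(q)})$ and merging the maxima over $(i,j)$ and the path index into a single maximum over tuples $(i_1,\ldots,i_{L+1})$ --- is exactly the intended derivation. The bookkeeping with the factor $L$ and the path correspondence is handled correctly.
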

\begin{remark}
Again one can find the optimal path via the Viterbi algorithm with complexity $O(L)$. $\square$
\end{remark}

\subsection{Multi-hop Networks with Identical Layers} \label{sec:multi-layer}

While Theorem~\ref{Theorem:IC_Networks} offers a way to find the optimal strategy for general layered networks, it is sometimes more useful to understand the ``patterns'' or structures of the optimal communication schemes for large-scaled networks. For instance, suppose that channel parameters are available only locally. Then the communication patterns can serve to design local communication strategies for users. In this section, we explore the communication patterns for a certain network: the $L$-layered network with identical channel parameters for each layer and $L \rightarrow \infty$. Specifically, for all layers $\ell$, the channel parameters are identical and denoted as $\sigma_{ij}^{2,(\ell)} = \sigma^2_{ij}$. The following theorem identifies the fundamental communication modes of the optimal strategies.

\begin{theorem}[Identical layers]
\label{Theorem:IC_SymmetricNetworks}
Consider a layered network illustrated in Fig.~\ref{fig:IC_Networks}, where $\sigma_{ij}^{2,(\ell)} = \sigma^2_{ij}, \forall \ell$, and $L \rightarrow \infty$. Then, the LIC sum capacity is
\begin{align} \label{eq:thm1}
\begin{split}
C_{\sf sum} &= \max \left \{ \sigma_{11}^2, \sigma_{00}^2, \sigma_{22}^2, M ( \sigma_{10}^2, \sigma_{01}^2), M ( \sigma_{20}^2, \sigma_{02}^2 ), M (\sigma_{12}^2, \sigma_{21}^2 ) , M ( \sigma_{10}^2, \sigma_{02}^2, \sigma_{21}^2 ), M ( \sigma_{20}^2, \sigma_{01}^2, \sigma_{12}^2 ) \right \},
\end{split}
\end{align}
where $M(\cdot)$ denotes the harmonic mean.
\end{theorem}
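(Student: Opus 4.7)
The plan is to apply Corollary~\ref{Theorem:IC_Networks_sum} and reduce the sum-capacity optimization to the classical minimum-mean-cycle problem on a three-node directed graph, and then enumerate all simple cycles explicitly.

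First, specializing Corollary~\ref{Theorem:IC_Networks_sum} to identical layers gives
\begin{equation*}
C_{\sf sum} \;=\; \max_{i_1,\ldots,i_{L+1}\in[0:2]} \frac{L}{\sum_{\ell=1}^{L} 1/\sigma_{i_\ell i_{\ell+1}}^2},
\end{equation*}
so maximizing the harmonic mean is equivalent to minimizing the arithmetic mean of the inverse edge weights $c(i,j) := 1/\sigma_{ij}^2$ along an $L$-step directed walk $i_1\to i_2\to\cdots\to i_{L+1}$ in the complete digraph $G$ on vertex set $[0:2]$ with self-loops included.

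Second, I would invoke the standard minimum-mean-cycle principle to show that, in the limit $L\to\infty$, the optimum over walks coincides with the maximum of the harmonic mean taken over simple directed cycles of $G$. For achievability, given any simple cycle $C$ of length $k\le 3$, one can reach $C$ from any starting vertex in at most two steps and then traverse $C$ repeatedly; the average inverse cost of the resulting walk converges to the mean inverse cost of $C$ as $L\to\infty$, so its harmonic mean converges to $M$ of $C$'s edge weights. For the converse, I would iteratively peel off closed sub-walks from the given walk (a closed sub-walk appears whenever a vertex is revisited), decompose each peeled closed sub-walk into simple cycles, and observe that only a simple residual path of length at most $2$ remains; hence the walk's total inverse cost equals a sum of simple-cycle inverse costs plus a bounded $O(1)$ transient, making its average inverse cost at least the minimum mean simple-cycle cost up to an $O(1/L)$ error that vanishes as $L\to\infty$.

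Third, I would enumerate all simple directed cycles in $G$: the three self-loops contribute $\sigma_{00}^2,\ \sigma_{11}^2,\ \sigma_{22}^2$; the three length-two cycles $i\!\leftrightarrow\!j$ (for $\{i,j\}=\{1,0\},\{2,0\},\{1,2\}$) contribute $M(\sigma_{10}^2,\sigma_{01}^2)$, $M(\sigma_{20}^2,\sigma_{02}^2)$, and $M(\sigma_{12}^2,\sigma_{21}^2)$; and the two oriented length-three cycles $1\!\to\!0\!\to\!2\!\to\!1$ and $2\!\to\!0\!\to\!1\!\to\!2$ contribute $M(\sigma_{10}^2,\sigma_{02}^2,\sigma_{21}^2)$ and $M(\sigma_{20}^2,\sigma_{01}^2,\sigma_{12}^2)$. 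These eight quantities are precisely the terms inside the $\max$ in~\eqref{eq:thm1}, completing the proof.

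The step I expect to be most delicate is the converse in the second paragraph: rigorously turning an arbitrary open $L$-step walk into a certificate that its mean inverse cost is asymptotically at least the minimum simple-cycle mean. The iterative peeling argument sketched above handles the transient by absorbing the open prefix and suffix into an $O(1)$ additive term that vanishes once divided by $L$, while the fact that the minimum over all directed cycles is attained on a simple one follows because any non-simple cycle's mean inverse cost is a convex combination of the mean inverse costs of the simple cycles into which it decomposes.
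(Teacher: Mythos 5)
Your proof is correct, but it takes a genuinely different route from the paper's. The paper does not go through Corollary~\ref{Theorem:IC_Networks_sum} at all: its converse averages the per-layer allocations $\delta_{ij}^{(\ell)}$ over $\ell$ to compress the $L$-layer problem into a single-layer linear program with flow-conservation constraints at each virtual node, proves a separate continuity lemma (Lemma~\ref{lemma1}, with a lengthy case analysis in the appendix) to justify letting the imbalance $\gamma/L$ tend to zero, and then solves the resulting LP by hand via the two cases $\delta_{02}=0$ and $\delta_{02}\neq 0$; achievability activates the links of a single cycle with harmonic-mean weights. Your route --- specialize Corollary~\ref{Theorem:IC_Networks_sum} to identical layers, recognize the maximization of the harmonic mean over $L$-step walks as the minimum-mean-cycle problem on the complete $3$-vertex digraph with self-loops, and enumerate the eight simple directed cycles --- is shorter, bypasses both the continuity lemma and the LP case analysis, and makes transparent why exactly eight modes appear (they are precisely the simple cycles). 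What you give up is that the paper's flow-LP formulation accommodates simultaneous multi-message schemes directly, whereas you lean on Theorem~\ref{Theorem:IC_Networks} and its corollary having already established that single-path routing is sum-rate optimal at every finite $L$; that is a legitimate dependence since those results precede this theorem. One small repair to your achievability step: do not ``reach the cycle in at most two steps,'' since a connecting edge with $\sigma_{ij}^2=0$ has infinite inverse cost and would pin the walk's harmonic mean at zero for every $L$; because the maximization in Corollary~\ref{Theorem:IC_Networks_sum} is over all of $i_1,\dots,i_{L+1}$, you may simply start the walk on the optimal cycle, and the transient disappears.
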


\begin{proof}
Let us first prove the converse part. First observe that we use the routing-only scheme to pass information through the network. Thus, for any optimal communication scheme, we have the inflow equal to outflow for every node in the intermediate layers, i.e., for all $k$ and $\ell$,
\begin{align} \label{eq:Flow_Balance}
\sum_{i=0}^{2} \delta_{ik}^{(\ell-1)} \sigma_{ik}^2 = \sum_{j=0}^{2} \delta_{kj}^{(\ell)} \sigma_{kj}^2.
\end{align}
Moreover, for all $\ell$, the total throughput of the network is $\sum_{k ,j = 0}^2 \delta_{kj}^{(\ell)} \sigma_{kj}^2$. Now, for a network with $L$ layers, let us define a tuple of $\delta_{ij}^{(\ell)}$ as a \emph{$\gamma$-scheme}, if
\begin{align*}
\sum_{k = 0}^2 \left| \sum_{j=0}^{2} \delta_{kj}^{(1)} \sigma_{kj}^2 - \sum_{i=0}^{2} \delta_{ik}^{(L)} \sigma_{ik}^2 \right| = \gamma.
\end{align*}
Here we define $C_{{\sf sum},\gamma}^{(L)}$ as the optimal achievable throughput among all $\gamma$-schemes. Since our goal is to optimize the network throughput, it suffices to only consider $\gamma$-schemes that satisfy~\eqref{eq:Flow_Balance}. Now, we  want to show that if a $\gamma$-scheme satisfies~\eqref{eq:Flow_Balance}, then $\gamma$ is upper bounded by $2 \max_{i,j} \sigma_{ij}^2$, and not increasing with $L$. To see this, note that
\begin{align*}
\gamma \leq \sum_{k = 0}^2  \sum_{j=0}^{2} \delta_{kj}^{(1)} \sigma_{kj}^2 + \sum_{k = 0}^2  \sum_{i=0}^{2} \delta_{ik}^{(L)} \sigma_{ik}^2
=  2 C_{{\sf sum},\gamma}^{(L)}  \leq 2 \max_{i,j} \sigma_{ij}^2,
\end{align*}
where the first inequality is the triangle inequality, and the second equality comes from the fact that the inflow is equal to the outflow for the schemes achieving the optimal network throughput~\eqref{eq:Flow_Balance}; hence, $\sum_{k = 0}^2  \sum_{j=0}^{2} \delta_{kj}^{(1)} \sigma_{kj}^2 = \sum_{k = 0}^2  \sum_{i=0}^{2} \delta_{ik}^{(L)} \sigma_{ik}^2 = C_{{\sf sum},\gamma}^{(L)}$. Finally, the last inequality is a trivial upper bound for the network throughput.

Now, the key technique to find the optimal throughput of the $L$-layered network is to reduce the $L$-layered optimization problem to a single-layered one. This is illustrated as follows: for any $\gamma$-scheme $\delta_{ij}^{(\ell)}$ of a network with $L$ layers that achieves $C_{{\sf sum},\gamma}^{(L)}$ and satisfies~\eqref{eq:Flow_Balance}, we consider the tuple $\tilde{\delta}_{ij}$ for $i,j = 0,1,2$, where
\begin{align*}
\tilde{\delta}_{ij} = \frac{1}{L} \sum_{\ell = 1}^{L} \delta_{ij}^{(\ell)}.
\end{align*}
Then, we have
\begin{align*}
&\sum_{k = 0}^2 \left| \sum_{j=0}^{2} \tilde{\delta}_{kj} \sigma_{kj}^2 - \sum_{i=0}^{2} \tilde{\delta}_{ik} \sigma_{ik}^2 \right| \\
=& \frac{1}{L} \sum_{k = 0}^2 \left| \sum_{j=0}^{2} \sum_{\ell = 1}^{L} \delta_{kj}^{(\ell)} \sigma_{kj}^2 - \sum_{i=0}^{2} \sum_{\ell = 1}^{L} \delta_{ik}^{(\ell)} \sigma_{ik}^2 \right| \\
=& \frac{1}{L} \sum_{k = 0}^2 \left| \sum_{\ell = 1}^{L} \sum_{j=0}^{2}  \delta_{kj}^{(\ell)} \sigma_{kj}^2 - \sum_{\ell = 2}^{L+1} \sum_{i=0}^{2}  \delta_{ik}^{(\ell-1)} \sigma_{ik}^2 \right| \\
=& \frac{1}{L} \sum_{k = 0}^2 \left| \sum_{j=0}^{2} \delta_{kj}^{(1)} \sigma_{kj}^2 - \sum_{i=0}^{2} \delta_{ik}^{(L)} \sigma_{ik}^2 \right| = \frac{\gamma}{L}.
\end{align*}
Therefore, $\tilde{\delta}_{ij}$ is a $(\gamma/L)$-scheme for a new network with only one layer, and this single layer is identical to each of the $L$ layers of the original $L$-layered network. Moreover, from~\eqref{eq:Flow_Balance}, for the $\gamma$-scheme $\delta_{ij}^{(\ell)}$ of the original $L$-layered network, the inflow and outflow of all layers are the same. So, the total throughput of the $(\gamma/L)$-scheme $\tilde{\delta}_{ij}$ of the new single-layered network is
\begin{align*}
\sum_{k = 0}^2 \sum_{j=0}^{2} \tilde{\delta}_{kj} \sigma_{kj}^2 = \frac{1}{L} \sum_{\ell = 1}^{L} \sum_{k = 0}^2 \sum_{j=0}^{2}  \delta_{kj}^{(\ell)} \sigma_{kj}^2
= \sum_{k = 0}^2 \sum_{j=0}^{2}  \delta_{kj}^{(1)} \sigma_{kj}^2 = C_{{\sf sum},\gamma}^{(L)}.
\end{align*}
This implies that ${C}_{{\sf sum},\gamma}^{(L)} \leq {C}_{{\sf sum}, \frac{\gamma}{L}}^{(1)}$. Thus, ${C}_{{\sf sum}, \frac{\gamma}{L}}^{(1)}$ is an upper bound for ${C}_{{\sf sum},\gamma}^{(L)}$, and we only need to show that $\lim_{L \rightarrow \infty}{C}_{{\sf sum}, \frac{\gamma}{L}}^{(1)}$ converges to the right hand side of~\eqref{eq:thm1}. To this end, let us first show that ${C}_{{\sf sum}, \frac{\gamma}{L}}^{(1)} $ is continuous at $\frac{\gamma}{L} = 0$.
\begin{lemma} \label{lemma1}
$\lim_{\varepsilon \rightarrow 0^{+}}{C}_{{\sf sum}, \varepsilon}^{(1)} = {C}_{{\sf sum}, 0}^{(1)}$.
\end{lemma}
\begin{proof}
See Appendix~\ref{app}.
%Since all the involved functions are linear and quadratic functions of $\delta_{ij}$'s, it is easy to check from the Karush-Kuhn-Tucker (KKT) conditions that the optimal solutions of $\delta_{ij}$'s are \red{continuous} at $\frac{\gamma}{L^2} = 0$, which implies that $C_{{\sf sum}, \frac{\gamma}{L^2}}^{(1)}$ is continuous at $\frac{\gamma}{L^2} = 0$.
\end{proof}

Now, note that $\gamma$ is bounded by the constant $2 \max_{i,j} \sigma_{ij}^2$, independent of $L$, so $\frac{\gamma}{L} \rightarrow 0$ in the limit of $L$. Hence, we have
\begin{align} \label{eq:(23)}
{C}_{{\sf sum}} \leq \lim_{L \rightarrow \infty} {C}_{{\sf sum},  \frac{\gamma}{L}}^{(1)} = {C}_{{\sf sum}, 0}^{(1)},
\end{align}
where the limit exists due to the continuity at $\frac{\gamma}{L} = 0$. Therefore, an upper bound of ${C}_{{\sf sum}}$ can be found by the following optimization problem:
\begin{align*}
{C}_{\sf sum} & \leq \max_{\delta_{ij}} \sum_{i,j} \delta_{ij} \sigma_{ij}^2: \\
&\textrm{ s.t. } \sum_{i,j} \delta_{ij} \leq 1,  \;\; \delta_{ij} \geq 0 \; \forall i,j \\
&\quad \;\;\; \sum_{i=0}^{2} \delta_{ik} \sigma_{ik}^2 = \sum_{j=0}^{2} \delta_{kj} \sigma_{kj}^2, \; k \in [0:2].
\end{align*}
Note that the objective indicates the total amount of information that flows into the destinations. The three equality constraints in the above can be equivalently written as two equality constraints:
\begin{align}
\begin{split}
\label{eq:SymmetricNetwork_detail1}
&\delta_{01} = \left( \frac{ \sigma_{10}^2 }{\sigma_{01}^2} \right) \delta_{10} +
\left( \frac{\sigma_{20}^2}{\sigma_{01}^2} \right) \delta_{20} - \left( \frac{\sigma_{02}^2}{\sigma_{01}^2} \right) \delta_{02} \\
&\delta_{12} = \left( \frac{\sigma_{20}^2}{\sigma_{12}^2} \right) \delta_{20} +
\left( \frac{\sigma_{21}^2}{\sigma_{12}^2} \right) \delta_{21} - \left( \frac{\sigma_{02}^2}{\sigma_{12}^2} \right) \delta_{02}.
\end{split}
\end{align}
Note that all of the $\delta_{ij}$'s are non-negative, we take a careful look at the minus terms associated with $\delta_{02}$. This leads us to consider two cases: $(1)$ $\delta_{02}=0$; $(2)$ $\delta_{02} \neq 0$.

The first is an easy case. For $\delta_{02}=0$, the problem can be simplified into:
\begin{align*}
&\max_{\delta_{ij}} \sum_{i=0}^{2} \delta_{ii} \sigma_{ii}^2 + ( 2 \delta_{10} \sigma_{10}^2
+ 3 \delta_{20} \sigma_{20}^2 + 2 \delta_{21} \sigma_{21}^2): \\
&\textrm{ s.t. } \sum_{i=0}^{2} \delta_{ii}  + \left( 1 + \frac{ \sigma_{10}^2 }{\sigma_{01}^2} \right) \delta_{10} + \left( 1 + \frac{ \sigma_{21}^2 }{\sigma_{12}^2} \right) \delta_{21} \\
 &\qquad  + \left( 1 + \frac{ \sigma_{20}^2 }{\sigma_{01}^2} + \frac{ \sigma_{20}^2 }{\sigma_{12}^2} \right) \delta_{20} \leq 1, \;\; \delta_{ij} \geq 0, \;\forall i,j.
\end{align*}
This LP problem is straightforward. Due to the linearity, the optimal solution must be setting only one $\delta_{ij}$ as a non-trivial maximum value while making the other allocations zeros. Hence, we obtain:
\begin{align}
\begin{split}
\label{eq:IN_identical_bound}
C_{\sf sum}  \leq \max \left \{ \sigma_{11}^2, \sigma_{00}^2, \sigma_{22}^2, M (\sigma_{10}^2, \sigma_{01}^2), M (\sigma_{12}^2, \sigma_{21}^2 ), M ( \sigma_{20}^2, \sigma_{01}^2, \sigma_{12}^2 ) \right \}.
\end{split}
\end{align}
Here, the fourth term $M (\sigma_{10}^2, \sigma_{01}^2)$, for example, is obtained when $\delta_{10} = \frac{1}{1 + \sigma_{10}^2 / \sigma_{01}^2}$ and $\delta_{ij} = 0$ for $(i,j) \neq (1,0)$. The last term $M ( \sigma_{20}^2, \sigma_{01}^2, \sigma_{12}^2 )$ corresponds to the case when $\delta_{20} = \frac{1}{ 1 + \sigma_{20}^2 / \sigma_{01}^2 + \sigma_{20}^2 / \sigma_{12}^2}$ and $\delta_{ij} = 0$ for $(i,j) \neq (2,0)$.

We next consider the second case of $\delta_{02} \neq 0$. First note that since $\delta_{01}$ and $\delta_{12}$ are nonnegative, by~\eqref{eq:SymmetricNetwork_detail1}, we get
\begin{align*}
& \delta_{02} \leq \left( \frac{\sigma_{20}^2}{\sigma_{02}^2} \right) \delta_{20} +  \left( \frac{\sigma_{10}^2}{\sigma_{02}^2} \right) \delta_{10},\\
& \delta_{02} \leq \left( \frac{\sigma_{20}^2}{\sigma_{02}^2} \right) \delta_{20} +   \left( \frac{\sigma_{21}^2}{\sigma_{02}^2} \right) \delta_{21}.
\end{align*}
The key point here is that in general LP problems, whenever $\delta_{02} \neq 0$, the optimal solution occurs when $\delta_{02}$ is the largest as possible and the above two inequalities are balanced:
\begin{align*}
&\delta_{02} = \left( \frac{\sigma_{20}^2}{\sigma_{02}^2} \right) \delta_{20} + \left( \frac{\sigma_{10}^2}{\sigma_{02}^2} \right) \delta_{10}, \\
 \left( \frac{\sigma_{10}^2}{\sigma_{02}^2} \right) &\delta_{10} = \left( \frac{\sigma_{21}^2}{\sigma_{02}^2} \right) \delta_{21}.
\end{align*}
Therefore, for $\delta_{02} \neq 0$, the problem can be simplified into:
\begin{align*}
&\max_{\delta_{ij}} \sum_{i=0}^{2} \delta_{ii} \sigma_{ii}^2 + ( 3 \delta_{10} \sigma_{10}^2
+ 2 \delta_{20} \sigma_{20}^2): \\
&\textrm{ s.t. } \sum_{i=0}^{2} \delta_{ii}  + \left( 1 + \frac{ \sigma_{10}^2 }{\sigma_{02}^2} + \frac{ \sigma_{10}^2 }{\sigma_{21}^2}  \right) \delta_{10} + \left( 1 + \frac{ \sigma_{20}^2 }{\sigma_{02}^2} \right) \delta_{20} \\
&\qquad  \delta_{ij} \geq 0, \;\forall i,j.
\end{align*}
This LP problem is also straightforward. Using the linearity, we can get:
\begin{align}
\begin{split}
\label{eq:IN_identical_bound2}
C_{\sf sum} \leq \max \left \{ \sigma_{11}^2, \sigma_{00}^2, \sigma_{22}^2,
 M (\sigma_{20}^2, \sigma_{02}^2), M ( \sigma_{10}^2, \sigma_{02}^2, \sigma_{21}^2 ) \right \}.
\end{split}
\end{align}
By~\eqref{eq:IN_identical_bound} and~\eqref{eq:IN_identical_bound2}, we complete the converse proof.

%Solving~\eqref{eq:thm2} (e.g. see~\cite{SL04}), we get the optimal sum rate~\eqref{eq:thm1}.

%We only present the achievability proof of Theorem~\ref{Theorem:IC_SymmetricNetworks} in this paper, but omit the converse part.
For the achievability, note that $\sigma^2_{i i} = M(\sigma^2_{i i})$, so all 8 modes in~\eqref{eq:thm1} can be written in the form $M(\sigma^2_{i_1 i_2}, \sigma^2_{i_2 i_3}, \ldots , \sigma^2_{i_k i_1})$, for $k = 1,2,3$, and $i_1, \ldots , i_k$ are mutually different. Then, for $k = 1,2,3$, $n \in [1:k]$, and $\ell \in [1:L]$, the $M(\sigma^2_{i_1 i_2}, \sigma^2_{i_2 i_3}, \ldots , \sigma^2_{i_k i_1})$ can be achieved by setting
\begin{align} \label{eq:abc}
\delta_{i_n i_{n+1}}^{(\ell)} = \delta_{i_n i_{n+1}} = \frac{M(\sigma^2_{i_1 i_2}, \sigma^2_{i_2 i_3}, \ldots , \sigma^2_{i_k i_1}) }{k \sigma^2_{i_n i_{n+1}}},
\end{align}
and deactivating all other links by setting their $\delta_{ij}$'s to zero. Here, we assume that in \eqref{eq:abc}, when $n = k$, $\delta_{i_k i_{k+1}}$ denotes $\delta_{i_k i_{1}}$. It is easy to verify that the assignment of \eqref{eq:abc} satisfies the constraint \eqref{eq:IC_Network_Locality}, thus we prove the achievability.
\end{proof}

\begin{figure}[t]
\begin{center}
{\epsfig{figure=./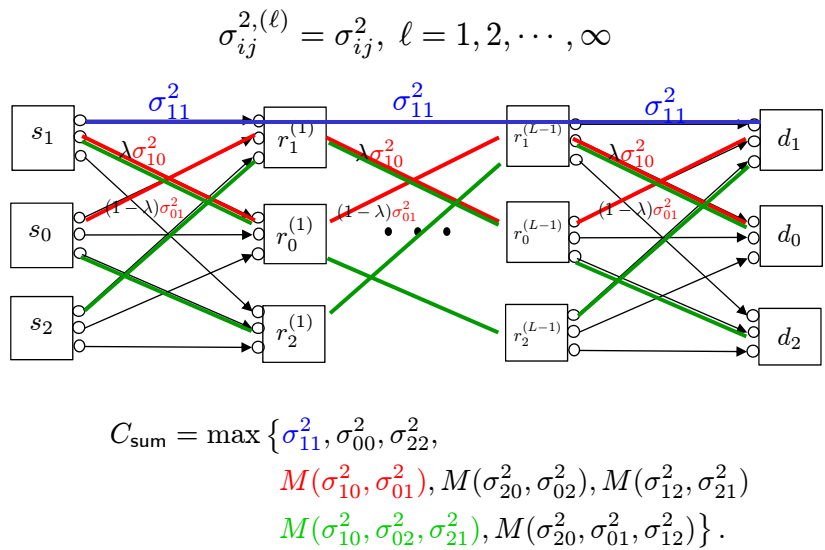, angle=0, width=0.6\textwidth}}
\end{center}
\caption{LIC sum capacity of multi-hop interference networks with identical layers.
} \label{fig:IC_SymmetricNetworks}
\end{figure}

Theorem~\ref{Theorem:IC_SymmetricNetworks} implies that the optimal communication scheme is from one of the eight communication modes in~\eqref{eq:thm1}. Fig.~\ref{fig:IC_SymmetricNetworks} illustrates the communication schemes that achieves modes $\sigma^2_{00}$, $M(\sigma^2_{12},\sigma^2_{21})$, and $M(\sigma^2_{10},\sigma^2_{02},\sigma^2_{21})$, where other modes can be achieved similarly. For example, the mode $M(\sigma^2_{10},\sigma^2_{02},\sigma^2_{21})$ is achieved by using links $1-0$, $0-2$, and $2-1$, such that
\begin{align*}
\delta_{10} \sigma^2_{10} = \delta_{02} \sigma^2_{02} = \delta_{21} \sigma^2_{21} = \frac{M(\sigma^2_{10},\sigma^2_{02},\sigma^2_{21})}{3},
\end{align*}
and other $\delta_{ij} = 0$. Then, the information flow for each layer and the sum rate are all $M(\sigma^2_{10},\sigma^2_{02},\sigma^2_{21})$.

\begin{figure}
\centering
\subfigure[]{
{\epsfig{figure=./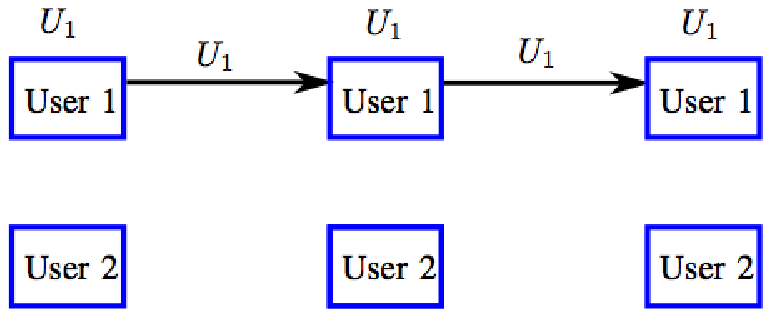, angle=0, width=0.4\textwidth}}
\label{fig:example1_4}
}
\subfigure[]{
{\epsfig{figure=./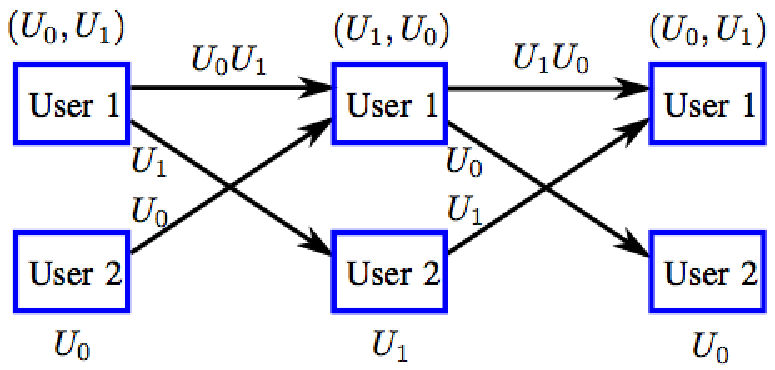, angle=0, width=0.4\textwidth}}
\label{fig:example1_3}
}
\subfigure[]{
{\epsfig{figure=./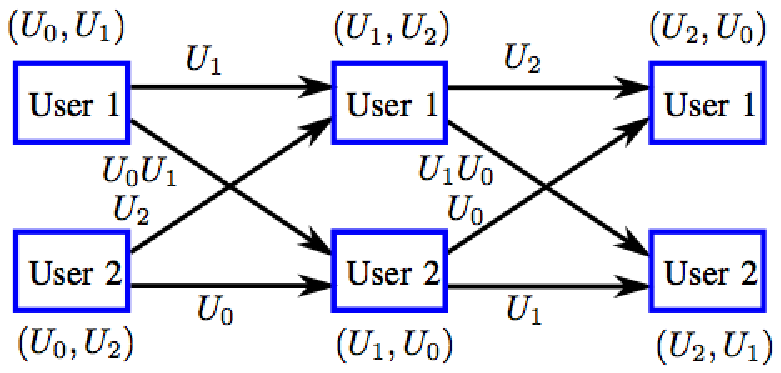, angle=0, width=0.4\textwidth}}
\label{fig:example1_2}
}
\caption{The rolling of different pieces of information between users layer by layer for the optimal communication scheme that achieves (a) $\sigma^2_{11}$ (b) $M ( \sigma^2_{10}, \sigma^2_{01})$ (c) $M ( \sigma^2_{10}, \sigma^2_{02}, \sigma^2_{21} )$.
}
\end{figure}

More interestingly, in order to achieve~\eqref{eq:thm1}, it requires the cooperation between users, and rolling the knowledges of different part of messages between users layer by layer. We demonstrate this by considering the communication scheme that achieves $M ( \sigma^2_{10}, \sigma^2_{02}, \sigma^2_{21} )$ as an example. Suppose that at the first layer, the node $s_i$ has the knowledge of message $U_i$, for $i = 0,1,2$. Since $s_0$ is the virtual node that represents the common message of both users, user $1$ knows messages $(U_0, U_1)$, and user $2$ knows $(U_0, U_2)$. Then, to achieve $M ( \sigma^2_{10}, \sigma^2_{02}, \sigma^2_{21} )$, user $1$ broadcasts its private message $U_1$ to both users in the next layer, and both users in the first layer cooperate to transmit their common message to user $2$ in the next layer as the private message. Thus, in the second layer user $1$ decodes messages $(U_1, U_2)$ and user $2$ decodes $(U_1, U_0)$. Similarly, in the third layer, user $1$ decodes $(U_2, U_0)$ and user $2$ decodes $(U_2, U_1)$, and then loop back. This effect is shown by Fig.~\ref{fig:example1_2}. Therefore, according to the values of channel parameters, Theorem~\ref{Theorem:IC_SymmetricNetworks} demonstrates the optimal communication mode, and hence indicates what kind of common messages should be generated to achieve the optimal sum rate.

\section{Feedback}
\label{sec:feedback}

We next explore the role of feedback under our local geometric approach. As in the previous section, we employ the decode-and-forward scheme for both forward and feedback transmissions, under which decoded messages at each node (instead of analog received signals) are fed back to the nodes in preceding layers. In this model, one can view the feedback as bit-pipe links added on top of a deterministic channel. %Under this assumption, we can immediately obtain feedback capacity results on single-hop BCs and MACs.
%begin{lemma}
%For the deterministic model of discrete-memoryless BCs shown in Fig.~\ref{fig:BC}, feedback does not increase capacity. Similarly for the deterministic model of discrete-memoryless MACs, feedback does not increase capacity either.
%\end{lemma}
%\begin{proof}
%In the deterministic BC, as received signals are functions of transmitted signals, so is feedback. Therefore, feedback does not help. The deterministic MAC can be interpreted as three parallel point-to-point channels where feedback is shown to be useless in increasing capacity~\cite{shannon:it}. Hence, the capacity does not increase with feedback either.
%\end{proof}
With this assumption on the feedback, we can see that in the deterministic model of the BC, as received signals are functions of transmitted signals, so is feedback. Therefore, feedback does not increase the LIC capacity region. The deterministic MAC can be interpreted as three parallel point-to-point channels, where feedback is shown to be useless in increasing the traditional capacity~\cite{shannon:it}. Hence, the LIC capacity region does not increase with feedback either. In contrast, we will show that feedback can indeed increase the LIC capacity region for a variety of scenarios in multi-hop layered networks. Let us start with interference channels.

\subsection{Interference Channels}

\begin{theorem}
\label{theorem:IC_Feedback}
Consider the deterministic model of interference channels illustrated in Fig.~\ref{fig:IC}. Assume that decoded messages at each receiver are fed back to all the transmitters. Let $\delta_{ij}$ be the network resource consumed for delivering the message $U_{ij}$, and assume $\sum \delta_{ij} \leq 1$. The feedback LIC capacity region is then
\begin{align*}
{\cal C}_{\sf IC}^{\sf fb} = \bigcup_{ \sum  \delta_{ij}   \leq 1
  } \left \{ (R_{11}, \cdots, R_{22}): R_{k0}  \leq \delta_{k0} \sigma_{k0}^{2, {\sf fb}}, \; k \neq 0, \right. \\
  \left. R_{ij}  \leq \delta_{ij} \sigma_{ij}^2, \; (i,j) \neq (1,0), (2,0)  \right \},
\end{align*}
where
\begin{align}
\begin{split}
\label{eq:lambda_fb}
&\sigma_{10}^{2,{\sf fb}} = \max \left \{ \sigma_{10}^2, \frac{M ( \sigma_{12}^2, \sigma_{01}^2)}{2}, \frac{M ( \sigma_{11}^2, \sigma_{02}^2)}{2} \right \}, \\
&\sigma_{20}^{2,{\sf fb}} = \max \left \{ \sigma_{20}^2, \frac{M ( \sigma_{21}^2, \sigma_{02}^2)}{2}, \frac{M ( \sigma_{22}^2, \sigma_{01}^2)}{2} \right \}.
\end{split}
\end{align}
\end{theorem}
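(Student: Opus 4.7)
The plan is to prove the achievability and the converse separately, after first arguing that among the nine message types only the two transmitter-private receiver-common messages $U_{10}$ and $U_{20}$ can benefit from feedback. For the private-to-private messages $U_{ij}$ with $i,j\neq 0$, only one transmitter ever holds $U_{ij}$ and only one receiver wants it, so feedback of already-known bits carries no information; for the Tx-common messages $U_{0j}$, both transmitters already share the message and feedback offers nothing to exploit. In the decode-and-forward deterministic model, I would formalize this by noting that any bits fed back from Rx $k$ to Tx $i$ about $U_{ij}$ or $U_{0j}$ can be recomputed at Tx $i$ from its own forward transmissions, yielding $\sigma_{ij}^{2,{\sf fb}}=\sigma_{ij}^2$ outside the pair $(1,0),(2,0)$.

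For the achievability of $\sigma_{10}^{2,{\sf fb}}$, I would exhibit three schemes and time-share among them. The trivial scheme broadcasts $U_{10}$ directly, achieving $\sigma_{10}^2$. The second scheme splits time into two phases: in phase~$1$ of length $\alpha$, Tx~$1$ sends $U_{10}$ to Rx~$2$ as a private message at rate $\alpha\sigma_{12}^2$ using the $U_{12}$-link; Rx~$2$ decodes and feeds the bits back to both transmitters. In phase~$2$ of length $1-\alpha$, both transmitters now possess $U_{10}$ and cooperatively transmit it to Rx~$1$ as a Tx-common-to-Rx-$1$ message at rate $(1-\alpha)\sigma_{01}^2$, using the $U_{01}$-link. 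Balancing $\alpha\sigma_{12}^2=(1-\alpha)\sigma_{01}^2$ yields rate $\sigma_{12}^2\sigma_{01}^2/(\sigma_{12}^2+\sigma_{01}^2) = M(\sigma_{12}^2,\sigma_{01}^2)/2$. The third scheme is the mirror image through Rx~$1$, giving $M(\sigma_{11}^2,\sigma_{02}^2)/2$. The bound for $\sigma_{20}^{2,{\sf fb}}$ follows by a symmetric argument.

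For the converse, I would adapt the linear-programming/flow-conservation argument in the proof of Theorem~\ref{Theorem:IC_SymmetricNetworks}. The idea is to model the IC-with-feedback as a small two-layer bit-pipe network whose forward layer consists of the nine $\sigma_{ij}^2$-links of Fig.~\ref{fig:IC} and whose reverse layer consists of the feedback pipes, then to write the rate $R_{10}$ as a flow originating at virtual Tx~$1$ and terminating at both virtual Rxs~$1$ and~$2$. Enumerating the simple flow decompositions permitted by the deterministic model (direct common-broadcast, a detour via Rx~$2$ followed by a cooperative $\sigma_{01}$-segment to Rx~$1$, or its mirror via Rx~$1$ and $\sigma_{02}$), each decomposition is bounded by one of the three terms in~\eqref{eq:lambda_fb}, and LP duality on the flow polytope shows the bound is tight.

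The main obstacle will be the converse, specifically ruling out more elaborate multi-round feedback protocols where forward and feedback transmissions interleave. To handle this I would invoke the network-equivalence reduction alluded to in the introduction, arguing that any such scheme can be converted without loss to an equivalent layer-by-layer strategy with feedback restricted to the immediately preceding layer; once the feedback protocol is layered, the flow-conservation LP of Theorem~\ref{Theorem:IC_SymmetricNetworks} applies, and the only flow patterns that contribute to $R_{10}$ are precisely the three harmonic-mean paths appearing in~\eqref{eq:lambda_fb}.
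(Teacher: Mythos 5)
Your proposal is correct and follows essentially the same route as the paper: the heart of the paper's proof is exactly your achievability argument, enumerating the direct link plus the two feedback-relay routes (Tx $1\rightarrow$ Rx $2\stackrel{\sf feedback}{\longrightarrow}$ Tx $0\rightarrow$ Rx $1$ and its mirror) and optimizing the resource split across the two hops to obtain the normalized harmonic means $\tfrac{1}{2}M(\sigma_{12}^2,\sigma_{01}^2)$ and $\tfrac{1}{2}M(\sigma_{11}^2,\sigma_{02}^2)$. The paper's converse is in fact only implicit in this route enumeration, so your additional LP/flow-conservation and network-equivalence machinery goes beyond what the paper actually writes down (and is not needed to match it).
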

\begin{proof}
%Similar to multi-hop networks, each link can be used for multiple purposes.
%Note that once $\delta_{ij}$'s are fixed subject to the constraint, the 9 sub-problems are independent each other.
Fix $\delta_{ij}$'s subject to the constraint. First, consider the transmission of $U_{ij}$ when $(i,j) \neq (1,0), (2,0)$. In this case, the maximum rate can be achieved by using the Tx $i$-to-Rx $j$ link. Hence, $R_{ij} \leq \delta_{ij} \sigma_{ij}^2$.

\begin{figure}[t]
\begin{center}
{\epsfig{figure=./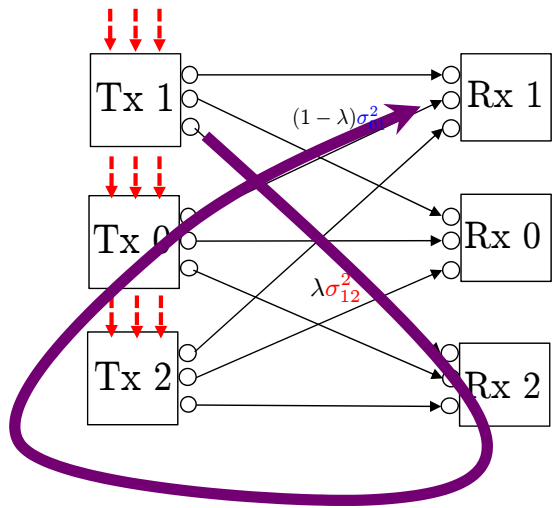, angle=0, width=0.4\textwidth}}
\end{center}
\caption{An alternative way to deliver the common message of $U_{10}$. One alternative way is to take a route: virtual-Tx 1 $\rightarrow$ virtual-Rx 2 $\stackrel{{\sf feedback}}{\longrightarrow}$ virtual-Tx 0 $\rightarrow$ virtual-Rx 1. The message is clearly a common message intended for both receivers, as it is delivered to both virtual-Rxs. We can optimize the allocation to the two links to obtain the rate of $\frac{1}{2} M (\sigma_{12}^2, \sigma_{01}^2)$.
} \label{fig:Feedback_Alternative}
\end{figure}

On the other hand, in sending $U_{10}$, we may have better alternative paths. One alternative way is to take a route as shown in Fig.~\ref{fig:Feedback_Alternative}: Tx 1 $\rightarrow$ Rx 2 $\overset{{\sf feedback}}{\longrightarrow}$ Tx 0 $\rightarrow$ virtual-Rx 1. The message is clearly a common message intended for both receivers, as it is delivered to both virtual-Rxs. Suppose that the network resource $\delta_{10}$ is allocated such that the $\lambda$ fraction is assigned to the $\sigma_{12}^2$-capacity link and the remaining $(1-\lambda)$ fraction is assigned to the $\sigma_{01}^2$-capacity link. The rate is then $\min \{ \lambda \sigma_{12}^2, (1-\lambda) \sigma_{01}^2 \}$, which can be maximized as $\frac{1}{2} M (\sigma_{12}^2, \sigma_{01}^2)$. The other alternative path is: virtual-Tx 1 $\rightarrow$ virtual-Rx 1 $\overset{{\sf feedback}}{\longrightarrow}$ virtual-Tx 0 $\rightarrow$ virtual-Rx 2.
 With this route, we can achieve $\frac{1}{2} M (\sigma_{11}^2, \sigma_{02}^2)$.
Therefore, we can obtain $\sigma_{10}^{2, {\sf fb}}$ as claimed. Similarly we can get the claimed $\sigma_{20}^{2, {\sf fb}}$.
\end{proof}
\begin{remark}[Role of feedback]
In the traditional communication setting, it is well known that feedback can increase the capacity region of MACs and degraded BCs~\cite{Ozarow:it,Ozarow:it2}, but the capacity improvement is marginal, providing at most a constant number of bits in the Gaussian channel. On the other hand, feedback can provide a more significant gain in ICs: in the Gaussian channel, it provides an arbitrarily large gain as signal-to-noise ratios of the links increase~\cite{SuhTse}. In the LIC problem setting, the impact of feedback is similar yet slightly different. The difference is that for MACs and BCs, feedback has no bearing on the LIC capacity regions. However, as can be seen from Theorem~\ref{theorem:IC_Feedback}, feedback can strictly increase the LIC capacity region in the interference channels. Also the nature of the feedback gain is similar to that in~\cite{SuhTse,Kramer:it02}: relaying gain. From Fig~\ref{fig:Feedback_Alternative}, one can see that feedback provides an alternative better path, thus making the beamforming gain effectively larger compared to the nonfeedback case. Also the feedback gain can be multiplicative, which is qualitatively similar to the gain in the two-user Gaussian interference channels~\cite{SuhTse}. Here is a concrete example in which feedback provides a multiplicative gain in the LIC capacity region. $\square$
\end{remark}
\begin{example}
Consider the same interference channel as in Example~\ref{example:IC} but which includes feedback links from all receivers to all transmitters. We obtain the same $\sigma_{ij}$'s except the following two:
\begin{align*}
\sigma_{10}^{2,{\sf fb}} = \sigma_{20}^{2,{\sf fb}} = \frac{1}{3} \left(1 - 2 \alpha \right )^2 \geq \frac{1}{4} \left(1 - 2 \alpha \right )^2  = \sigma_{10}^2 = \sigma_{20}^2.
 \end{align*}
Note that $\frac{\sigma_{10}^{2,{\sf fb}}}{\sigma_{10}^2} = \frac{4}{3}$ when $\alpha \neq \frac{1}{2}$, implying a $33\%$ gain w.r.t $R_{10}$.
%Note that feedback is useful in increasing individual rates such as $R_{10}$ or $R_{20}$, but it does not increase the sum-rate though:
%\begin{align*}
%C_{\sf sum}^{\sf fb} = \max_{i,j} \{ \sigma_{ij}^2, \sigma_{10}^{2,{\sf fb}}, \sigma_{20}^{2,{\sf fb}} \} =  \max \{ \sigma_{01}^2,\sigma_{02}^2 \} = C_{\sf sum}.
%\end{align*}
$\square$
\end{example}

\begin{remark}
%{\bf Network equivalence:}
With Theorem~\ref{theorem:IC_Feedback}, one can simply model an interference channel with feedback as a nonfeedback interference channel, in which channel parameters $(\sigma_{10}^2,\sigma_{20}^2)$ are replaced by the $(\sigma_{10}^{2,{\sf fb}},\sigma_{20}^{2,{\sf fb}})$ in~\eqref{eq:lambda_fb}. See Fig.~\ref{fig:IC_Feedback}.
\end{remark}

\begin{figure}[t]
\begin{center}
{\epsfig{figure=./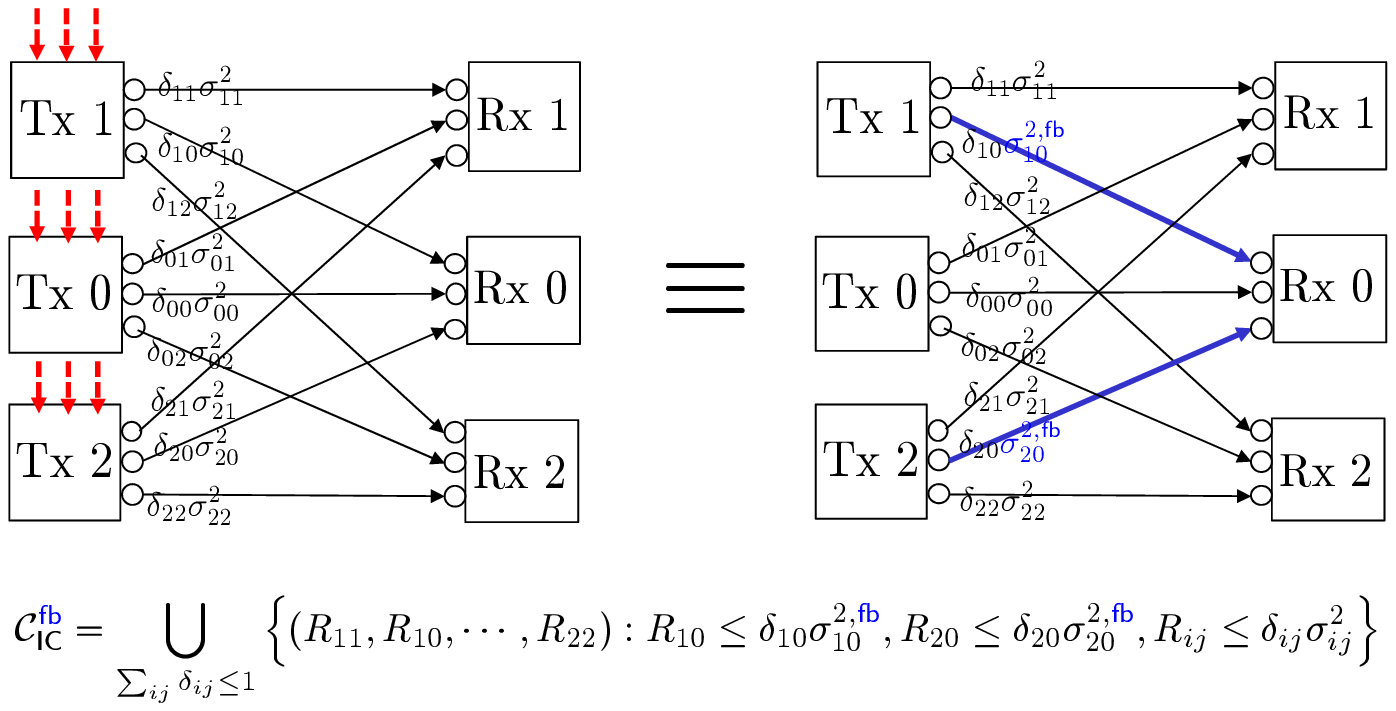, angle=0, width=0.7\textwidth}}
\end{center}
\caption{Interference channels with feedback. A feedback IC can be interpreted as a nonfeedback IC where  $(\sigma_{10}^2,\sigma_{20}^2)$ are replaced by the $(\sigma_{10}^{2,{\sf fb}},\sigma_{20}^{2,{\sf fb}})$ in~\eqref{eq:lambda_fb}.
} \label{fig:IC_Feedback}
\end{figure}

\subsection{Multi-hop Layered Networks}

For multi-hop layered networks, we investigate two feedback models: (1) full-feedback model, where the decoded messages at each node is fed back to the nodes in all the preceding layers; (2) layered-feedback model, where the feedback is available only to the nodes in the immediately preceding layer.

\begin{theorem}
\label{theorem:ICF_Networks}
Consider a multi-hop layered network illustrated in Fig.~\ref{fig:IC_Networks}. Assume that $\delta_{ij}^{(\ell)}$'s satisfy the constraint of~\eqref{eq:IC_Network_Locality}. Then, the feedback LIC capacity region of the full-feedback model is the same as that of the layered-feedback model, and is given by
\begin{align}
{\cal C}_{\sf LN}^{\sf fb} = \bigcup_{ \sum  \delta_{ij}   \leq L
  } \left \{ (R_{11}, R_{10}, \cdots, R_{22}): R_{ij}  \leq \delta_{ij} \sigma_{ij}^2 \right \},
\end{align}
where
\begin{align*}
\sigma_{ij}^2 = \frac{1}{L} \max_{1 \leq q \leq 3^{L-1}}M ( {\cal P}_{ij}^{{\sf fb},(q)} ).
\end{align*}
Here, the elements of the set ${\cal P}_{ij}^{{\sf fb},(q)}$ are with respect to a translated network where $(\sigma_{10}^{2,(\ell)},\sigma_{20}^{2,(\ell)})$ are replaced by $(\sigma_{10}^{2,(\ell),{\sf fb}},\sigma_{20}^{2,(\ell),{\sf fb}})$ for each layer $\ell \in [1:L]$:
\begin{align}
\begin{split}
\label{eq:lambda_fb_layer}
&\sigma_{10}^{2,(\ell),\sf fb} = \max \left \{ \sigma_{10}^{2,(\ell)}, \frac{M ( \sigma_{12}^{2,(\ell)}, \sigma_{01}^{2,(\ell)})}{2}, \frac{M ( \sigma_{11}^{2,(\ell)}, \sigma_{02}^{2,(\ell)})}{2} \right \}, \\
&\sigma_{20}^{2,(\ell),\sf fb} = \max \left \{ \sigma_{20}^{2,(\ell)}, \frac{M ( \sigma_{21}^{2,(\ell)}, \sigma_{02}^{2,(\ell)})}{2}, \frac{M ( \sigma_{22}^{2,(\ell)}, \sigma_{01}^{2,(\ell)})}{2} \right \}.
\end{split}
\end{align}
\end{theorem}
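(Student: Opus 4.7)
The statement combines two claims: (i) an explicit characterization of the feedback LIC capacity region, and (ii) the assertion that allowing feedback from every node to \emph{all} preceding layers offers no advantage over feedback restricted to the immediately preceding layer. My plan is to establish achievability in the (weaker) layered-feedback model and the converse in the (stronger) full-feedback model; since the layered-feedback region is trivially contained in the full-feedback region, the two sandwich the explicit expression and must coincide.

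For the achievability side, I would invoke Theorem~\ref{theorem:IC_Feedback} layer by layer. In the layered-feedback model the only feedback available to the transmitters of layer $\ell$ is from the receivers of layer $\ell$ itself, so each layer in isolation is precisely the single-hop IC-with-feedback setting of Theorem~\ref{theorem:IC_Feedback}. Applying that theorem converts layer $\ell$ into an equivalent feedback-free IC whose nine bit-pipe capacities are $\delta_{ij}^{(\ell)} \sigma_{ij}^{2,(\ell),\sf fb}$, with $\sigma_{k0}^{2,(\ell),\sf fb}$ as in~\eqref{eq:lambda_fb_layer} and $\sigma_{ij}^{2,(\ell),\sf fb} = \sigma_{ij}^{2,(\ell)}$ for $(i,j)\neq(1,0),(2,0)$. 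Concatenating the enhanced layers yields a nonfeedback multi-hop layered network to which Theorem~\ref{Theorem:IC_Networks} applies verbatim; its harmonic-mean-over-paths formula then gives exactly the claimed $\sigma_{ij}^2 = \frac{1}{L}\max_{q} M({\cal P}_{ij}^{{\sf fb},(q)})$, and the achievable allocation in the proof of Theorem~\ref{Theorem:IC_Networks} carries over unchanged.

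For the converse in the full-feedback model, I would use a network-equivalence argument. The idea is that any strategy exploiting feedback spanning more than one layer can be simulated, without loss of rate, by a strategy using only single-layer feedback hops interleaved with additional forward routing. Specifically, a feedback link from layer $\ell_2$ back to layer $\ell_1$ with $\ell_2 > \ell_1 + 1$ carries some message already decoded at a layer-$\ell_2$ node; because our deterministic model is bit-pipe based, that message can equivalently be passed forward layer-by-layer through the $\sigma_{ij}^{2,(\ell)}$ bit-pipes and then delivered to layer $\ell_1$ by a single-layer feedback hop from layer $\ell_1+1$. Each such simulated step consumes only $\delta^{(\ell)}$-budget that is already counted by the path-routing optimization in~\eqref{eq:IC_Network_Locality}, so the full-feedback rate cannot exceed the maximum harmonic-mean path weight in the enhanced network, which matches the achievability bound.

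The main obstacle will be making this simulation rigorous without overcounting the $\delta_{ij}^{(\ell)}$ resources. I would carry out a cut-set style bookkeeping on the enhanced deterministic model: associate with every information token traversing the full-feedback network a sequence of single-layer hops in the enhanced layered-feedback network, verify that these sequences respect the per-layer resource constraint inherited from~\eqref{eq:IC_Network_Locality}, and check that the per-layer enhancement in~\eqref{eq:lambda_fb_layer} already captures every single-layer relay pattern that feedback can enable. Combined with the matching achievability, this would yield ${\cal C}_{\sf LN}^{\sf fb}(\text{full}) = {\cal C}_{\sf LN}^{\sf fb}(\text{layered})$ together with the explicit formula, completing the proof.
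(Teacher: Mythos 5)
Your overall architecture matches the paper's: achievability in the layered-feedback model by applying Theorem~\ref{theorem:IC_Feedback} layer by layer to get an enhanced nonfeedback network and then invoking Theorem~\ref{Theorem:IC_Networks}, plus an equivalence argument showing full feedback buys nothing more. The achievability half is essentially the paper's proof. The gap is in your converse/equivalence step. Your proposed simulation of a long-range feedback link from layer $\ell_2$ back to layer $\ell_1$ (with $\ell_2>\ell_1+1$) is to ``pass the message forward layer-by-layer and then deliver it by a single-layer feedback hop from layer $\ell_1+1$'' --- but forwarding moves the message \emph{downstream}, away from layer $\ell_1$, and a one-hop feedback from layer $\ell_1+1$ requires the message to already sit at layer $\ell_1+1$, which is upstream of where it was decoded. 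As described, the simulation does not transport information in the right direction, and the ``cut-set style bookkeeping'' you defer to is exactly the part that would have to carry the proof; you acknowledge it as an obstacle but do not supply it.

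The paper sidesteps simulation entirely with a functional-dependence induction: because the network is deterministic, the downstream outputs $\{Y^{(\ell),t-1}\}_{\ell\geq 2}$ are functions of the first intermediate layer's past inputs $X^{(1),t-1}$, which are in turn functions of feedback received strictly earlier in time; iterating this reduction in the time index (equations~\eqref{eq:full-layeredFB1}--\eqref{eq:full-layeredFB2}) shows $X_i[t]\overset{f}{=}(U_i, Y^{(1),t-1})$ and, for intermediate nodes, $X_i^{(\ell)}[t]\overset{f}{=}(Y_i^{(\ell),t-1},Y^{(\ell+1),t-1})$. In other words, feedback from layers beyond the immediately following one is \emph{redundant} --- the node can already compute it --- so there is nothing to route and no resource accounting to do. I would replace your simulation argument with this induction; with the equivalence in hand, the exact region then follows from Theorems~\ref{theorem:IC_Feedback} and~\ref{Theorem:IC_Networks} as in your achievability step.
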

\begin{proof}
First, let us prove the equivalence between the full-feedback and layered-feedback models. We introduce some notations. Let $X_{i}[t]$ be the transmitted signal of virtual source $s_i$ at time $t$; let $X_{i}^{(\ell)}[t]$ be the transmitted signal of node $r_{i}^{(\ell)}$ at time $t$; and let $X^{(\ell)}[t] = \left[ X_{1}^{(\ell)}[t], X_{0}^{(\ell)}[t],X_{2}^{(\ell)}[t] \right]$, where $\ell \in [1:L-1]$. Define $X^{t-1} =\{ X[j] \}_{j=1}^{t-1}$.
 Let $Y_{i}^{(\ell)}[t]$ be the received signal of node $r_{i}^{(\ell)}$ at time $t$, and let $Y^{(\ell)}[t] = \left[ Y_{1}^{(\ell)}[t], Y_{0}^{(\ell)}[t],Y_{2}^{(\ell)}[t] \right]$, where $\ell \in [1:L]$. Let $U_i = [U_{i1},U_{i0},U_{i2}]$.
We use the notation $A \overset{f}{=} B$ to indicate that $A$ is a function $B$.

\begin{figure}[t]
\begin{center}
{\epsfig{figure=./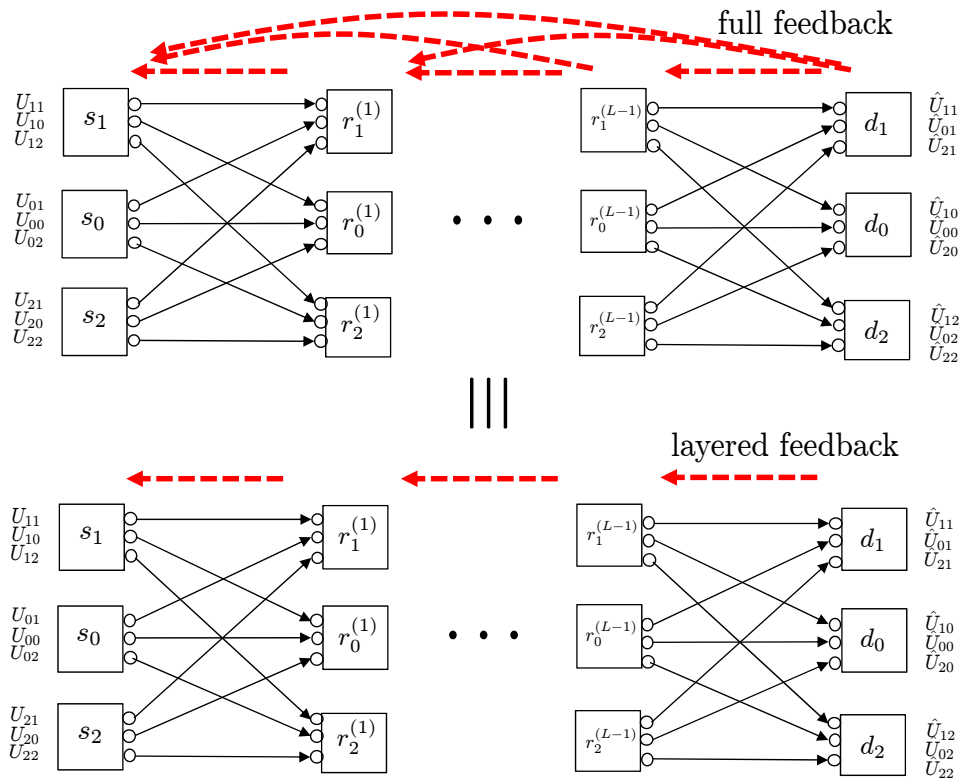, angle=0, width=0.7\textwidth}}
\end{center}
\caption{Network equivalence.  The feedback LIC capacity region of the full-feedback model is the same as that of the layered-feedback model.
} \label{fig:FB_NetworkEqui}
\end{figure}

Under the full-feedback model, we then get
\begin{align}
\begin{split}
\label{eq:full-layeredFB1}
X_{i}[t] & \overset{f}{=} (U_i, \{ Y^{(\ell),t-1} \}_{\ell = 1}^{L} ) \\
& \overset{f}{=} (U_i, Y^{(1),t-1}, X^{(1),t-1}) \\
& \overset{f}{=} (U_i, Y^{(1),t-1}, \{ Y^{(\ell),\blue{t-2}} \}_{\ell = 2}^{L}) \\
& \overset{f}{=} (U_i, Y^{(1),t-1}, X^{(1),\blue{t-2}}) \\
& \qquad \qquad \vdots \\
& \overset{f}{=} (U_i, Y^{(1),t-1}, X^{(1)}[\blue{1}]) \\
& \overset{f}{=} (U_i, Y^{(1),t-1})
\end{split}
\end{align}
where the second step follows from the fact that in deterministic layered networks, $\{ Y^{(\ell),t-1} \}_{\ell = 2}^{L}$ is a function of $X^{(1),t-1}$; the third step follows from the fact that $X^{(1),t-1} \overset{f}{=} (Y^{(1),t-2}, \{ Y^{(\ell),t-2} \}_{\ell = 2}^{L})$; and the second last step is due to iterating the previous steps $(t-3)$ times.

Using similar arguments, we can also show that for $\ell \in [1:L-1]$,
\begin{align}
\begin{split}
\label{eq:full-layeredFB2}
X_{i}^{(\ell)}[t] & \overset{f}{=} (Y_i^{(\ell),t-1}, \{ Y^{(j),t-1} \}_{j = \ell+1}^{L}) \\
& \overset{f}{=} (Y_i^{(\ell),t-1}, Y^{(\ell + 1),t-1},  X^{(\ell+1),t-1}) \\
& \overset{f}{=} (Y_i^{(\ell),t-1}, Y^{(\ell + 1),t-1}, \{ Y^{(j),\blue{t-2}} \}_{j = \ell+2}^{L}) \\
& \overset{f}{=} (Y_i^{(\ell),t-1}, Y^{(\ell + 1),t-1}, X^{(\ell+1),\blue{t-2}}) \\
& \qquad \qquad \vdots \\
& \overset{f}{=} (Y_i^{(\ell),t-1}, Y^{(\ell + 1),t-1}, X^{(\ell+1)}[\blue{1}]) \\
& \overset{f}{=} (Y_i^{(\ell),t-1}, Y^{(\ell + 1),t-1}).
\end{split}
\end{align}
The functional relationships of~\eqref{eq:full-layeredFB1} and~\eqref{eq:full-layeredFB2} imply that any rate point in the full-feedback LIC capacity region can also be achieved in the layered-feedback LIC capacity region. This proves the equivalence of the two feedback models. See Fig.~\ref{fig:FB_NetworkEqui}.

We next focus on the LIC capacity region characterization under the layered-feedback model. The key idea is to employ Theorem~\ref{theorem:IC_Feedback}, thus translating each layer with feedback into an equivalent nonfeedback layer, where $(\sigma_{10}^{2,(\ell)},\sigma_{20}^{2,(\ell)})$ are replaced by $(\sigma_{10}^{2,(\ell),{\sf fb}},\sigma_{20}^{2,(\ell),{\sf fb}})$ in~\eqref{eq:lambda_fb_layer}. We can then apply Theorem~\ref{Theorem:IC_Networks} to obtain the claimed LIC capacity region.
\end{proof}

\subsection{Multi-hop Networks with Identical Layers}

\begin{theorem}
\label{theorem:ICF_SymmetricNetworks}
Consider a multi-hop layered network in which  $\sigma_{ij}^{(\ell)} = \sigma_{ij}, \forall \ell$ and $L = \infty$. For both full-feedback and layered-feedback models, the feedback LIC sum capacity is the same as
\begin{align}
\begin{split}
C_{\sf sum}^{\sf fb} &= \max \left \{ \sigma_{11}^2, \sigma_{00}^2, \sigma_{22}^2,
M (\sigma_{10}^{2,{\sf fb}}, \sigma_{01}^2), M (\sigma_{20}^{2,{\sf fb}}, \sigma_{02}^2 ), M (\sigma_{12}^2, \sigma_{21}^2 ) ,
M ( \sigma_{10}^{2,{\sf fb}}, \sigma_{02}^2, \sigma_{21}^2 ), M ( \sigma_{20}^{2,{\sf fb}}, \sigma_{01}^2, \sigma_{12}^2 ) \right \},
\end{split}
\end{align}
where $(\sigma_{10}^{2,{\sf fb}},\sigma_{20}^{2,{\sf fb}})$ are of the same formulas as those in~\eqref{eq:lambda_fb}.
\end{theorem}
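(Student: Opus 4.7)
The plan is to derive this statement as a direct corollary of the two preceding results: Theorem~\ref{theorem:ICF_Networks}, which establishes the equivalence between full-feedback and layered-feedback and reduces a feedback network to a translated nonfeedback network, and Theorem~\ref{Theorem:IC_SymmetricNetworks}, which characterizes the sum capacity of identical-layer nonfeedback networks.

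First, I would invoke Theorem~\ref{theorem:ICF_Networks} to replace the original feedback network by a nonfeedback layered network in which only $(\sigma_{10}^{2,(\ell)},\sigma_{20}^{2,(\ell)})$ are modified to $(\sigma_{10}^{2,(\ell),{\sf fb}},\sigma_{20}^{2,(\ell),{\sf fb}})$, with all other $\sigma_{ij}^{2,(\ell)}$ kept intact. Since the present network is identical across layers ($\sigma_{ij}^{2,(\ell)}=\sigma_{ij}^2$ for every $\ell$), the per-layer formulas~\eqref{eq:lambda_fb_layer} collapse to the layer-independent expressions~\eqref{eq:lambda_fb}. Thus the translated network is again an identical-layer nonfeedback network, now with parameters $\sigma_{10}^{2,{\sf fb}}$ and $\sigma_{20}^{2,{\sf fb}}$, and with the remaining $\sigma_{ij}^{2}$ unchanged. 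The same theorem also certifies that the full-feedback and layered-feedback capacities coincide, so it suffices to work with this translated picture.

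Second, I would apply Theorem~\ref{Theorem:IC_SymmetricNetworks} to the translated network. The eight fundamental communication modes listed in~\eqref{eq:thm1} involve $\sigma_{10}^2$ and $\sigma_{20}^2$ only through the four harmonic means $M(\sigma_{10}^2,\sigma_{01}^2)$, $M(\sigma_{20}^2,\sigma_{02}^2)$, $M(\sigma_{10}^2,\sigma_{02}^2,\sigma_{21}^2)$, and $M(\sigma_{20}^2,\sigma_{01}^2,\sigma_{12}^2)$. Substituting $\sigma_{10}^{2,{\sf fb}}$ and $\sigma_{20}^{2,{\sf fb}}$ in these four places, and leaving the other four modes ($\sigma_{11}^2$, $\sigma_{00}^2$, $\sigma_{22}^2$, $M(\sigma_{12}^2,\sigma_{21}^2)$) unchanged, yields exactly the claimed expression for $C_{\sf sum}^{\sf fb}$.

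The only step that deserves care—rather than being a real obstacle—is confirming that the translation in Theorem~\ref{theorem:ICF_Networks} is truly capacity-preserving layer-by-layer in the limit $L\to\infty$, so that the asymptotic reduction used in Theorem~\ref{Theorem:IC_SymmetricNetworks} (averaging $\tilde{\delta}_{ij}=\tfrac{1}{L}\sum_\ell \delta_{ij}^{(\ell)}$ to collapse the $L$-layered LP to a single-layer one, together with the continuity Lemma~\ref{lemma1} applied at $\gamma/L\to 0$) still applies verbatim once $\sigma_{10}^2,\sigma_{20}^2$ are replaced by their feedback counterparts. Since the translation modifies link capacities but preserves the layered structure, the averaging and flow-balance arguments go through unchanged, and no further achievability scheme or LP needs to be solved; the theorem follows.
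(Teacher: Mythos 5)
Your proposal is correct and follows essentially the same route as the paper: invoke Theorem~\ref{theorem:ICF_Networks} to reduce to the layered-feedback model and translate each (identical) layer into an equivalent nonfeedback layer with $(\sigma_{10}^{2,{\sf fb}},\sigma_{20}^{2,{\sf fb}})$ from~\eqref{eq:lambda_fb}, then apply Theorem~\ref{Theorem:IC_SymmetricNetworks} to the translated identical-layer network. Your extra care about the $L\to\infty$ averaging argument surviving the substitution is a sound observation but not a departure from the paper's argument.
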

\begin{proof}
The proof is immediate from Theorems~\ref{Theorem:IC_SymmetricNetworks},~\ref{theorem:IC_Feedback}, and~\ref{theorem:ICF_Networks}. First, with Theorem~\ref{theorem:ICF_Networks}, it suffices to focus on the layered-feedback model. We then employ Theorem~\ref{theorem:IC_Feedback} to translate each layer with the layered feedback into an equivalent nonfeedback layer with the replaced parameters $(\sigma_{10}^{2,{\sf fb}},\sigma_{20}^{2,{\sf fb}})$. We can then use Theorem~\ref{Theorem:IC_SymmetricNetworks} to obtain the desired LIC sum capacity.
\end{proof}

\begin{figure}[t]
\begin{center}
{\epsfig{figure=./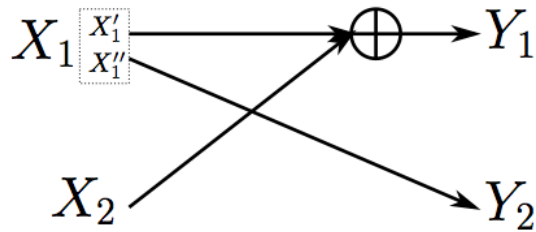, angle=0, width=0.4\textwidth}}
\end{center}
\caption{The input $X_1$ is composed of two binary inputs $X_1'$ and $X_1''$, and the input $X_2$ is binary. The output $Y_1 = X_1' \oplus X_2$, and the output $Y_2 = X_1''$.
} \label{fig:fb_ex}
\end{figure}

We see from Example 6 that the LIC sum capacity does not increase with feedback in a single-hop network. On the other hand, in multi-hop networks, we find that the LIC sum capacity can increase with feedback. Here is an example.
\begin{example}
Consider a multi-hop layered network in which each layer is the interference channel shown in Fig.~\ref{fig:fb_ex}. Tx $1$ has two binary inputs $X_1'$ and $X_1''$, and Tx $2$ has one binary input. The output $Y_1$ is equal to $X_1' \oplus X_2$ and the output $Y_2$ is equal to $X_1''$. Suppose that $P_{X_2}$ is fixed as $[0.1585 , 0.8415]$, and $P_{X_1} = P_{X_1' X_1''}$ is fixed as
\begin{align*}
&P_{X_1' X_1''} = \left\{
                 \begin{array}{ll}
                   0.095, & \hbox{$X_1' X_1'' = ( 00 , 01 )$;} \\
                   %0.095, & \hbox{$X_1' X_1'' = 0 1$;} \\
                   %0.405, & \hbox{$X_1' X_1'' = 1 0$;} \\
                   0.405, & \hbox{$X_1' X_1'' = ( 10 , 11 )$.}
                 \end{array}
               \right.
\end{align*}
Then, we have
\begin{align*}
&(\sigma_{11}^2,\sigma_{12}^2, \sigma_{10}^2)  = (0.35, 1, 0.26), \\
&(\sigma_{21}^2, \sigma_{22}^2, \sigma_{20}^2 ) = (0.25, 0, 0), \\
&(\sigma_{01}^2, \sigma_{02}^2, \sigma_{00}^2 ) = (0.6, 1, 0.375).
\end{align*}
From Theorem~\ref{Theorem:IC_SymmetricNetworks}, the nonfeedback LIC sum capacity is computed as $C_{\sf sum} = M (\sigma_{12}^2, \sigma_{21}^2 ) = 0.4$. On the other hand, $(\sigma_{10}^{2,{\sf fb}}, \sigma_{20}^{2,{\sf fb}}) = (0.375,0.2)$ and from Theorem~\ref{theorem:ICF_SymmetricNetworks}, the feedback LIC sum capacity is computed as $C_{\sf sum}^{\sf fb} =M (\sigma_{10}^{2,{\sf fb}}, \sigma_{01}^2 ) =0.4615$, thus showing a 15.4\% improvement. $\square$
\end{example}

We also find some classes of symmetric multi-hop layered networks, where feedback provides no gain in LIC sum capacity.
\begin{corollary}
Consider a two-source two-destination symmetric multi-hop layered network, where
\begin{align*}
&\lambda := \sigma_{11}^2=\sigma_{12}^2 = \sigma_{21}^2 = \sigma_{22}^2, \\
& \mu := \sigma_{10}^2 = \sigma_{20}^2, \\
& \sigma : = \sigma_{01}^2 = \sigma_{02}^2, \\
& \sigma_{00}^2.
\end{align*}
Assume that the parameters of $(\lambda,\mu,\sigma,\sigma_{00}^2)$ satisfy~\eqref{eq:IC_parameters_relation}. We then get:
\begin{align*}
C_{\sf sum} = C_{\sf sum}^{\sf fb}  = \max \{ \lambda, \sigma_{00}^2, M (\mu, \sigma), M (\mu, \lambda, \sigma) \}.
\end{align*}
\end{corollary}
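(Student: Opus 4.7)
The plan is to reduce everything to Theorem~\ref{Theorem:IC_SymmetricNetworks} and Theorem~\ref{theorem:ICF_SymmetricNetworks} by plugging in the symmetric parameters, and then check that the feedback improvements on $\sigma_{10}^2,\sigma_{20}^2$ never affect the sum-rate maximum.

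First, I will specialize Theorem~\ref{Theorem:IC_SymmetricNetworks} to the symmetric case. Under the identifications $\lambda=\sigma_{11}^2=\sigma_{12}^2=\sigma_{21}^2=\sigma_{22}^2$, $\mu=\sigma_{10}^2=\sigma_{20}^2$, $\sigma=\sigma_{01}^2=\sigma_{02}^2$, the eight harmonic-mean modes in \eqref{eq:thm1} pair up: $\sigma_{11}^2=\sigma_{22}^2=\lambda$, $M(\sigma_{12}^2,\sigma_{21}^2)=\lambda$, $M(\sigma_{10}^2,\sigma_{01}^2)=M(\sigma_{20}^2,\sigma_{02}^2)=M(\mu,\sigma)$, and $M(\sigma_{10}^2,\sigma_{02}^2,\sigma_{21}^2)=M(\sigma_{20}^2,\sigma_{01}^2,\sigma_{12}^2)=M(\mu,\lambda,\sigma)$. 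This collapses \eqref{eq:thm1} to the claimed $C_{\sf sum}=\max\{\lambda,\sigma_{00}^2,M(\mu,\sigma),M(\mu,\lambda,\sigma)\}$.

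Next, I will do the same for the feedback case. Under symmetry, the two expressions in \eqref{eq:lambda_fb} coincide and simplify to
\[
\mu^{\sf fb}:=\sigma_{10}^{2,{\sf fb}}=\sigma_{20}^{2,{\sf fb}}=\max\left\{\mu,\tfrac{1}{2}M(\lambda,\sigma)\right\}=\max\left\{\mu,\tfrac{\lambda\sigma}{\lambda+\sigma}\right\}.
\]
Plugging this into Theorem~\ref{theorem:ICF_SymmetricNetworks} and using the same symmetry collapse yields $C_{\sf sum}^{\sf fb}=\max\{\lambda,\sigma_{00}^2,M(\mu^{\sf fb},\sigma),M(\mu^{\sf fb},\lambda,\sigma)\}$. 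Since $\mu^{\sf fb}\geq\mu$, we immediately get $C_{\sf sum}^{\sf fb}\geq C_{\sf sum}$, so it only remains to prove the reverse inequality.

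The main (and essentially only nontrivial) step is to show that when $\mu^{\sf fb}>\mu$, i.e.\ $\mu^{\sf fb}=\tfrac{\lambda\sigma}{\lambda+\sigma}$, the two boosted harmonic means are dominated by $\lambda$, so that the feedback gain is nullified in the sum capacity. For this I will exploit the MAC-type upper bound from \eqref{eq:IC_parameters_relation}, which in the symmetric setting becomes $\sigma\leq 2\lambda$. A direct computation gives $M(\mu^{\sf fb},\sigma)=\tfrac{2\lambda\sigma}{2\lambda+\sigma}$, and $\tfrac{2\lambda\sigma}{2\lambda+\sigma}\leq\lambda$ is equivalent to $\sigma\leq 2\lambda$. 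Similarly, observing that $1/\mu^{\sf fb}=1/\lambda+1/\sigma$ yields $M(\mu^{\sf fb},\lambda,\sigma)=\tfrac{3}{2}\mu^{\sf fb}=\tfrac{3\lambda\sigma}{2(\lambda+\sigma)}$, and the inequality $\tfrac{3\lambda\sigma}{2(\lambda+\sigma)}\leq\lambda$ is again equivalent to $\sigma\leq 2\lambda$. Hence both new terms are bounded by $\lambda$, which is already present in the max, so $C_{\sf sum}^{\sf fb}\leq\max\{\lambda,\sigma_{00}^2\}\leq C_{\sf sum}$, completing the proof.

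The main obstacle I anticipate is not in the algebra itself but in recognizing the right closed form for $M(\mu^{\sf fb},\lambda,\sigma)$; using the identity $1/\mu^{\sf fb}=1/\lambda+1/\sigma$ (valid precisely in the regime $\mu^{\sf fb}=M(\lambda,\sigma)/2$) makes the three-term harmonic mean collapse to $\tfrac{3}{2}\mu^{\sf fb}$, and from there the structural constraint $\sigma\leq 2\lambda$ coming from the MAC side of \eqref{eq:IC_parameters_relation} is what seals the argument.
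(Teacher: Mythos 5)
Your proof is correct and follows essentially the same route as the paper: specialize Theorems~\ref{Theorem:IC_SymmetricNetworks} and~\ref{theorem:ICF_SymmetricNetworks} to the symmetric parameters, note $\sigma_{10}^{2,{\sf fb}}=\sigma_{20}^{2,{\sf fb}}=\max\{\mu,\tfrac{1}{2}M(\lambda,\sigma)\}$, and kill the boosted modes via the constraint $\sigma\leq 2\lambda$ from~\eqref{eq:IC_parameters_relation}. The only difference is cosmetic — you carry out explicitly the three-term computation $M\bigl(\tfrac{M(\lambda,\sigma)}{2},\lambda,\sigma\bigr)=\tfrac{3\lambda\sigma}{2(\lambda+\sigma)}\leq\lambda$ that the paper dismisses with ``similarly.''
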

\begin{proof}
Theorem~\ref{Theorem:IC_SymmetricNetworks} immediately yields $C_{\sf sum} = \max \{ \lambda, \sigma_{00}^2, M (\mu, \sigma), M (\mu, \lambda, \sigma) \}$. From Theorem~\ref{theorem:ICF_SymmetricNetworks}, we get:
\begin{align*}
C_{\sf sum}^{\sf fb}  = \max \left \{ \tilde{C}_{\sf sum}, M \left( \frac{M (\lambda, \sigma)}{2} , \sigma \right ), M \left( \frac{M (\lambda, \sigma)}{2} , \sigma, \lambda \right ) \right \}.
\end{align*}
Note that
\begin{align*}
M \left( \frac{M (\lambda, \sigma)}{2} , \sigma \right )  = \lambda \left(
\frac{2 \sigma}{2 \lambda + \sigma} \right) \leq \lambda,
\end{align*}
where the inequality comes from $\sigma \leq 2 \lambda$ due to~\eqref{eq:IC_parameters_relation}. Similarly we can show that $M \left( \frac{M (\lambda, \sigma)}{2} , \sigma, \lambda \right ) \leq \lambda$.
Therefore, $C_{\sf sum}^{\sf fb} = C_{\sf sum}$.
\end{proof}
%\begin{example}
%Consider a two-source two-destination symmetric multi-hop layered network where each layer is the same as the IC in Example~\ref{example:IC}. In this case, $C_{\sf sum} = C_{\sf sum}^{\sf fb} = \frac{1}{2} (1 - 2 \alpha)^2$. $\square$
%\end{example}

%%%%%%%%%%%%%%%%%%

\section{Discussions}
\label{sec:extension}

\subsection{Extension}
%$M$-source $K$-destination Networks}
A generalization to arbitrary $M$-source $K$-destination networks is straightforward. In the most general setting, we have $(2^{M}-1)$ virtual sources, $(2^K-1)$ virtual destinations, and $(2^{M}-1)(2^{K}-1)$ messages. For example, in the case of $(M,K)=(3,3)$,
\begin{align*}
\textrm{virtual sources: } &s_{1}, s_{2}, s_{3}, s_{12}, s_{13}, s_{23}, s_{123}, \\
\textrm{virtual destinations: } &d_{1}, d_{2}, d_{3}, d_{12}, d_{13}, d_{23}, d_{123},
\end{align*}
where, for instance, $s_{12}$ indicates a virtual terminal that sends messages accessible by
sources 1 and 2; and $d_{12}$ denotes a virtual terminal that decodes messages intended for destinations 1 and 2. And we have $7 \times 7 = 49$ messages, denoted by $U_{ {\cal S},  {\cal D}}$, where ${\cal S}, {\cal D} \subseteq \{ 1,2,3 \} (\neq \varnothing)$, each indicating a message which is accessible by the set $\cal S$ of sources, and is intended for the set $\cal D$ of destinations. For this network, we can then obtain 49-dimensional LIC capacity regions and LIC sum capacities, as we did in Theorems~\ref{Theorem:IC_Networks} and~\ref{Theorem:IC_SymmetricNetworks}. We can also extend to networks with feedback, thus obtaining the results corresponding to Theorems~\ref{theorem:ICF_Networks} and 5.

%\subsection{Cyclic Networks}

An extension to cyclic networks is also straightforward. The key idea is to employ an unfolding technique which enables us to translate a cyclic network into an equivalent layered network. Once it is converted into a layered network, we can then apply the same techniques developed herein, thus obtaining similar results.

\subsection{Non-separation Approach \& Network Coding}

In this work, we have assumed a separation scheme between layers. Only decoded messages at each node are forwarded to next layers. We also focused on the routing capacity, not allowing for network coding. So one future research direction of interest would be developing a non-separation and/or network-coding approach to explore whether or not it provides a performance improvement over the separation approach.

%\subsection{Network Coding}

%So a natural question that arises is: can we improve the performance by using the network coding? In fact, it can be shown that for certain types of networks, network coding can provide larger throughput than the routing only strategy. Moreover, note that in our model, although nodes in a layer are not co-located, they are allowed to cooperate if they have common messages. This is contrast to traditional network coding setups where distributed nodes cannot collaborate even if they share some common messages. Furthermore, for our networks, the cut-set bound may not be tight. Therefore, analyzing the optimal throughputs of our network models with network coding can be quite a challenging problem. This could be an interesting future research direction to extend our work.

%, and an interesting direction for future research.

%This causes much difficulty in analyzing the optimal throughputs of our network models when network coding is applied.

%\subsection{The Application to Cognitive Radio Network} \label{sec:CRN}

\subsection{Applications of the Local Geometric Approach} \label{sec:LGA}

In this work, we took a local geometric approach based on an approximation on KL-divergence, to address a class of network information theory problems which is often quite challenging. We find this approach useful for a variety of communication scenarios and other interesting applications. As mentioned earlier in Remark~\ref{remark:ADTvsLIC}, one such communication scenario is a cognitive radio network in which the secondary users wish to exchange their information while minimizing interference to the existing communication network. Here one can model the encoding of secondary users' signals as the superposition coding to the existing primary signals. Given the constraint on the interference level, the secondary users' signals will only slightly perturb the conditional input distribution w.r.t. the primary signals from the original input distribution. Then, the decoder will detect the perturbation to decode secondary users' messages.
%, resulted from the perturbed input distributions for different signals of secondary users.
Therefore, our model serves to study the efficiency of exchanging information between secondary users through superposition coding, when the perturbation to the existing primary signals is restricted.
%Therefore, with our model, one can quantify the optimal transmission efficiency and strategy for secondary users in the network.

In addition to communications problems, the local geometric approach can be applied to the stock market networks. It has been shown in~\cite{HuangZheng} that the local geometric approach plays a crucial role in finding an investment strategy that maximizes an incremental growth rate in repeated investments~\cite{ErkipCover}. The local geometric approach has also been exploited to a wide range of applications in machine learning: a learning problem in graphical models~\cite{Vincent:it11}, an inference problem in hidden Markov models~\cite{Shaolun:allerton14,Shaolun:isit15}, and big networked data analytics via communication and information theory~\cite{KC15,Chuang15}.

\section{Conclusion}
\label{sec:conclusion}

In this paper, we investigate the problem of how to efficiently transmit information through discrete-memoryless networks, by perturbing the given distributions of the nodes in the networks. In particular, we apply the local approximation technique to study this problem and construct a new type of deterministic model for multi-layer networks. Then, we employ this deterministic model to investigate the optimization of the throughput of multi-layer networks. Our results illustrate the optimal communication strategy for network users to optimize the efficiency of transmitting information through large scale networks. In addition, we also consider the multi-layer networks with feedback by our deterministic model. We find that for some classes of networks, feedback can provide insights of designing efficient information flows in large communication networks.

%\section*{Acknowledgements}
%C. Suh was supported by the National Research Foundation of Korea (NRF) Grant funded by the Korean
%Government (MSIP) (No. Grant Number: 2015R1C1A1A02036561).

\appendices

\section{Proof of Lemma~\ref{lemma1}} \label{app}
In this Appendix, we show that ${C}_{{\sf sum}, \varepsilon}^{(1)}$ is continuous at $\varepsilon = 0$, i.e., $\lim_{\varepsilon \rightarrow 0^{+}}{C}_{{\sf sum}, \varepsilon}^{(1)} = {C}_{{\sf sum}, 0}^{(1)}$. By the squeeze theorem, the continuity holds if the following inequalities are established: for $\varepsilon > 0$,
\begin{align} \label{eq:app1}
C_{{\sf sum}, \varepsilon}^{(1)} \geq C_{{\sf sum}, 0}^{(1)} \geq  C_{{\sf sum}, \varepsilon}^{(1)} - 4 \max_{\sigma_{ij} \neq 0} \{ \sigma_{ij}^{-2} \} \varepsilon \sum_{i,j}  \sigma_{ij}^2.
\end{align}
The upper bound of~\eqref{eq:app1} is trivial from the definition of $C_{{\sf sum}, \varepsilon}^{(1)}$. To show the lower bound of~\eqref{eq:app1}, we consider an optimal solution $\{ \delta^{*}_{ij} \}_{i,j = [0,2]}$ of ${C}_{{\sf sum}, \varepsilon}^{(1)}$, i.e., an optimal solution $\{ \delta^{*}_{ij} \}_{i,j = [0,2]}$ of the optimization problem:
\begin{align} \notag
C_{{\sf sum}, \varepsilon}^{(1)} & \leq \max_{\delta_{ij}} \sum_{i,j} \delta_{ij} \sigma_{ij}^2: \\ \notag
&\textrm{ s.t. } \sum_{i,j} \delta_{ij} \leq 1,  \;\; \delta_{ij} \geq 0 \; \forall i,j \\ \notag
&\quad \;\;\; \sum_{k=0}^2 \left| \sum_{j=0}^{2} \delta_{kj} \sigma_{kj}^2 - \sum_{i=0}^{2} \delta_{ik} \sigma_{ik}^2  \right| \leq \varepsilon. \;
\end{align}
If we can show that there exists a set of $\{ \hat{\delta}_{ij} \}_{i,j = [0,2]}$ satisfying
\begin{align} \label{the_constraint_1}
%\begin{array}{cc}
&\sum_{i,j} \hat{\delta}_{ij} \leq 1,  \;\; \hat{\delta}_{ij} \geq 0 \  \forall i,j  \\ \label{the_constraint_2}
&\sum_{k=0}^2 \left| \sum_{j=0}^{2} \hat{\delta}_{kj} \sigma_{kj}^2 - \sum_{i=0}^{2} \hat{\delta}_{ik} \sigma_{ik}^2  \right| \leq 0, \;
%\end{array}
\end{align}
and
\begin{align}  \label{the_constraint_3}
|\delta^{*}_{ij} - \hat{\delta}_{ij}| \leq 4 \max_{\sigma_{ij} \neq 0} \{ \sigma_{ij}^{-2} \} \varepsilon, \ \forall \ i,j,
\end{align}
then from~\eqref{the_constraint_1} and~\eqref{the_constraint_2}, we have $C_{{\sf sum}, 0}^{(1)} \geq  \sum_{i,j} \hat{\delta}_{ij} \sigma_{ij}^2$. Moreover, from~\eqref{the_constraint_3}, we get
\begin{align*}
\sum_{i,j} \hat{\delta}_{ij} \sigma_{ij}^2 \geq  \sum_{i,j} \delta^*_{ij} \sigma_{ij}^2 - 4 \max_{\sigma_{ij} \neq 0} \{ \sigma_{ij}^{-2} \} \varepsilon \sum_{i,j}  \sigma_{ij}^2,
\end{align*}
which implies the lower bound of~\eqref{eq:app1}.

%Now, we want to construct the objective $\{ \hat{\delta}_{ij} \}_{i,j = [0,2]}$ from $\{ \delta^{*}_{ij} \}_{i,j = [0,2]}$.

The idea of constructing such $\{ \hat{\delta}_{ij} \}_{i,j = [0,2]}$ is to first design each $\hat{\delta}_{ij}$ as a perturbation to $\delta^{*}_{ij}$, such that $\hat{\delta}_{ij} \geq 0$ and satisfy~\eqref{the_constraint_2} and~\eqref{the_constraint_3}. Then, the resultant $\hat{\delta}_{ij}$'s are multiplied by a normalizing factor to meet the constraint~$\sum_{i,j} \hat{\delta}_{ij} \leq 1$. To show the design of the perturbations, we define $\us_k  \triangleq\sum_{i=0}^{2} \delta^*_{ik} \sigma_{ik}^2 - \sum_{j=0}^{2} \delta^*_{kj} \sigma_{kj}^2$, where $\sum_{k = 0}^2 \us_k = 0$ from the definition. Then, since $\us_0$, $\us_1$, and $\us_2$ are symmetric w.r.t. $\sigma_{ij}$, we can without loss of generality assume $\us_0 \geq 0$, $\us_1 \geq 0$, and $\us_2 \leq 0$. In the following, we demonstrate the constructions of $\{ \hat{\delta}_{ij} \}_{i,j = [0,2]}$ for the cases of $\sigma_{20} $ and $\sigma_{21}$ being zero or nonzero:

\begin{itemize}
\item [(1)] $\sigma_{20} \neq 0 $, $\sigma_{21} \neq 0$:
\item [] In this case, we first design $\hat{\delta}_{20}$ and $\hat{\delta}_{21}$ as $\delta^*_{20} + \sigma_{20}^{-2} \us_0$ and $\delta^*_{21} + \sigma_{21}^{-2} \us_1$, and let $\hat{\delta}_{ij} = \delta^*_{ij}$ for the rest $i$ and $j$. Then, it is easy to verify that~\eqref{the_constraint_2} is satisfied. To meet the constraint $\sum_{i,j} \hat{\delta}_{ij} \leq 1$, we normalize $\hat{\delta}_{ij}$ by multiplying a factor $(1 +  \sigma_{20}^{-2} \us_0 + \sigma_{21}^{-2} \us_1)^{-1}$ to each $\hat{\delta}_{ij}$. The verification of~\eqref{the_constraint_3} for the resultant $\hat{\delta}_{ij}$ is straightforward. For example, for $\hat{\delta}_{20} = (1 +  \sigma_{20}^{-2} \us_0 + \sigma_{21}^{-2} \us_1)^{-1}(\delta^*_{20} + \sigma_{20}^{-2} \us_0)$, we have
%Let $\hat{\delta}_{20} = \eta_1 (\delta^*_{20} + \sigma_{20}^{-2} \us_1)$, $\hat{\delta}_{21} = \eta_1 (\delta^*_{21} + \sigma_{21}^{-2} \us_2)$, and $\hat{\delta}_{ij} = \eta_1 \delta^*_{ij}$, for $(i,j) \neq (2,0), (2,1)$, where $\eta_1 = (1 +  \sigma_{20}^{-2} \us_1 + \sigma_{21}^{-2} \us_2)^{-1}$. Then, since $|\us_1| \leq \varepsilon$, we have
\begin{align*}
|\delta^*_{20} - \hat{\delta}_{20}|
&\leq \left| \frac{\sigma_{20}^{-2} \us_0 + \sigma_{21}^{-2} \us_1}{1 +  \sigma_{20}^{-2} \us_0 + \sigma_{21}^{-2} \us_1} \delta^*_{20}  \right| + \left| \frac{\sigma_{20}^{-2} \us_0 }{1 +  \sigma_{20}^{-2} \us_0 + \sigma_{21}^{-2} \us_1} \right| \\
&\leq 2 \max_{\sigma_{ij} \neq 0} \{ \sigma_{ij}^{-2} \} \varepsilon  + \max_{\sigma_{ij} \neq 0} \{ \sigma_{ij}^{-2} \} \varepsilon = 3\max_{\sigma_{ij} \neq 0} \{ \sigma_{ij}^{-2} \} \varepsilon,
%|(1-\eta_1 ) \delta^*_{20} - \eta_1 \sigma_{20}^{-2} \us_1 | \leq |(1-\eta_1 ) \delta^*_{20}| + |\eta_1 \sigma_{20}^{-2} \us_1| \leq |1 - \eta_1 | + \max_{\sigma_{ij} \neq 0} \{ \sigma_{ij}^{-2} \} \varepsilon \leq 3 \max_{\sigma_{ij} \neq 0} \{ \sigma_{ij}^{-2} \} \varepsilon.
\end{align*}
where the first inequality is the triangle inequality, and second inequality is due to $\delta^*_{20} \leq 1$, and $\sum_{k = 0}^2 |\us_k| = \varepsilon$, which implies $|\us_k| \leq \varepsilon$, for all $k$.
%Similarly, one can verify that $|\delta^{*}_{ij} - \hat{\delta}_{ij}| \leq 3 \max_{\sigma_{ij} \neq 0} \{ \sigma_{ij}^{-2} \} \varepsilon$, for all $i,j$.
\item [(2)] $\sigma_{20} \neq 0 $, $\sigma_{21} = 0$: \
\begin{itemize}
\item [(i)] $\sigma_{01} \neq 0$:
\item []
In this case, $\hat{\delta}_{01}$ and $\hat{\delta}_{20}$ are designed as $\delta^*_{01} + \sigma_{01}^{-2} \us_1$ and $\delta^*_{20} + \sigma_{20}^{-2} (\us_0 + \us_1)$. In addition, we design $\delta^*_{21}$ as $0$, and for the rest $i$ and $j$, $\hat{\delta}_{ij} = \delta^*_{ij}$. Then, it is easy to check that~\eqref{the_constraint_2} is satisfied. Moreover, we multiply each $\hat{\delta}_{ij}$ by a factor $(1+ \sigma_{01}^{-2} \us_1 + \sigma_{20}^{-2} (\us_0 + \us_1))^{-1}$ so that the constraint $\sum_{i,j} \hat{\delta}_{ij} \leq 1$ is satisfied. To verify~\eqref{the_constraint_3}, note that when $\sigma_{ij} = 0$ for some $(i,j)$, then the corresponding $\delta^*_{ij} =0$, since $\{ \delta^*_{ij} \}_{i,j = [0,2]}$ is an optimal solution. Thus, we have $\delta^*_{21} = \hat{\delta}_{21} =  0$. The verification of~\eqref{the_constraint_3} for the rest $\hat{\delta}_{ij}$'s are the same as the case (1) by noting that $|\us_0 + \us_1| \leq |\us_0| + |\us_1| \leq \varepsilon$.
%Let $\eta_2 = (1+ \sigma_{01}^{-2} \us_2 + \sigma_{20}^{-2} (\us_1 + \us_2))^{-1}$, $\hat{\delta}_{01} = \eta_2 ( \delta^*_{01} + \sigma_{01}^{-2} \us_2 )$, $\hat{\delta}_{20} = \eta_2 (\delta^*_{20} + \sigma_{20}^{-2} (\us_1 + \us_2) )$, and $\hat{\delta}_{21} = 0$. In addition, let $\hat{\delta}_{ij} = \eta_2 \delta^*_{ij}$, for the rest $\hat{\delta}_{ij}$'s. Note that when $\sigma_{ij} = 0$ for some $(i,j)$, then the corresponding $\delta^*_{ij} =0$, since $\{ \delta^*_{ij} \}_{i,j = [0:2]}$ is an optimal solution. Thus, we have $\delta^*_{21} = \hat{\delta}_{21} =  0$. Since $|\us_1 + \us_2| \leq \varepsilon$, with the same argument as the case (1), we can verify that $\max_{ij} |\delta^{*}_{ij} - \hat{\delta}_{ij}| \leq 3 \max_{\sigma_{ij} \neq 0} \{ \sigma_{ij}^{-2} \} \varepsilon$.
\item [(ii)]  $\sigma_{01} = 0$: \
\item []
In this case, we design $\hat{\delta}_{20}$ as $\delta^*_{20} + \sigma_{20}^{-2} \us_0 + \sigma_{20}^{-2} \sigma_{10}^{2} \delta^*_{10}$, and $\delta^*_{01}$, $\delta^*_{10}$, $\delta^*_{12}$, $\delta^*_{21} $ as 0. In addition, for the rest $i$ and $j$, $\hat{\delta}_{ij} = \delta^*_{ij}$. Then, a factor $(1+ \sigma_{20}^{-2} \us_0 + \sigma_{20}^{-2} \sigma_{10}^{2} \delta^*_{10})^{-1}$ is multiplied to each $\hat{\delta}_{ij}$ for normalization. One can easily check that the resultant $\hat{\delta}_{ij}$'s satisfy both~\eqref{the_constraint_1} and~\eqref{the_constraint_2}. To verify~\eqref{the_constraint_3},  since $\sigma_{21} = \sigma_{01} = 0$, we have $\sigma_{10}^{2} \delta^*_{10} + \sigma_{12}^{2} \delta^*_{12} = \us_1 \leq \varepsilon$. Hence, $\sigma_{1k}^{2} \delta^*_{1k} \leq \varepsilon$, which implies $| \delta^*_{1k} - \hat{\delta}_{1k} |\leq \max_{\sigma_{ij} \neq 0} \{ \sigma_{ij}^{-2} \} \varepsilon$, for $k = 0,2$. Moreover, for $\hat{\delta}_{20} = (1+ \sigma_{20}^{-2} \us_0 + \sigma_{20}^{-2} \sigma_{10}^{2} \delta^*_{10})^{-1} (\delta^*_{20} + \sigma_{20}^{-2} \us_0 + \sigma_{20}^{-2} \sigma_{10}^{2} \delta^*_{10})$, we get
\begin{align*}
|\delta^*_{20} - \hat{\delta}_{20}|
&\leq \left| \frac{\sigma_{20}^{-2} \us_0 + \sigma_{20}^{-2} \sigma_{10}^{2} \delta^*_{10}}{1+ \sigma_{20}^{-2} \us_0 + \sigma_{20}^{-2} \sigma_{10}^{2} \delta^*_{10}} \delta^*_{20} \right| + |\sigma_{20}^{-2} \us_0 | + |\sigma_{20}^{-2} \sigma_{10}^{2} \delta^*_{10} | \\
&\leq \left| \sigma_{20}^{-2} \us_0 + \sigma_{20}^{-2} \sigma_{10}^{2} \delta^*_{10} \right| + |\sigma_{20}^{-2} \us_0 | + |\sigma_{20}^{-2} \sigma_{10}^{2} \delta^*_{10} | \leq 4 \max_{\sigma_{ij} \neq 0} \{ \sigma_{ij}^{-2} \} \varepsilon,
\end{align*}
where the second inequality is due to $\delta^*_{20} \leq 1$, and the third inequality is from $| \us_0 | \leq \varepsilon$ and $\sigma_{10}^{2} \delta^*_{10} \leq \varepsilon$. Finally, the verification of~\eqref{the_constraint_3} for the rest $\hat{\delta}_{ij}$'s is the same as the previous cases.
\end{itemize}
\item [(3)] $\sigma_{20} = 0 $, $\sigma_{21} \neq 0$: \
\item [] This case is symmetric to the case (2). By exchanging the role of subindexes $20 \leftrightarrow 21$, $01 \leftrightarrow 10$, and $\us_0 \leftrightarrow \us_1$, the construction is the same as the case (2),
\item [(4)] $\sigma_{20} = 0 $, $\sigma_{21} = 0$: \
\begin{itemize}
\item [(i)] $\sigma_{10} \neq 0$: \
\item []
In this case, if $\us_1 - \sigma_{02}^2 \delta^*_{02} \geq 0$, we design $\hat{\delta}_{10}$ as $\delta^*_{10} + \sigma_{10}^{-2} \us_1 - \sigma_{10}^{-2}\sigma_{02}^{2} \delta^*_{02}$; otherwise, design $\hat{\delta}_{01}$ as $ \delta^*_{01} - \sigma_{01}^{-2} \us_1 + \sigma_{01}^{-2}\sigma_{02}^{2} \delta^*_{02}$. In addition, we design $\hat{\delta}_{20}$, $\hat{\delta}_{21}$, $\hat{\delta}_{02}$, $\hat{\delta}_{12}$ to 0, and for the rest $i$ and $j$, $\hat{\delta}_{ij} = \delta^*_{ij}$'s. We multiply a factor $(1 + |\sigma_{10}^{-2} \us_1 - \sigma_{10}^{-2}\sigma_{02}^{2} \delta^*_{02}|)^{-1}$ to each $\hat{\delta}_{ij}$ for normalization. Then, one can check that~\eqref{the_constraint_1} and~\eqref{the_constraint_2} are satisfied for the resultant $\hat{\delta}_{ij}$. Note that since $\sigma_{20} = \sigma_{21} = 0$, we get $\sigma_{02}^{2} \delta^*_{02} + \sigma_{12}^{2} \delta^*_{12} = - \us_2 \leq \varepsilon$, which implies $\sigma_{k2}^{2} \delta^*_{k2} \leq \varepsilon$, for $k = 0,1$. Thus, by the same procedure as (ii) of the case (2), we can verify~\eqref{the_constraint_3}.
%so~\eqref{the_constraint_2} can be verified with the same manner as the cases (1) and (2).
%the verification of~\eqref{the_constraint_1} to~\eqref{the_constraint_3} for the resultant $\hat{\delta}_{ij}$ is the same as the cases (1) and (2).
%Let $\eta_4 = (1 + |\sigma_{10}^{-2} \us_1 - \sigma_{10}^{-2}\sigma_{02}^{2} \delta^*_{02}|)^{-1}$. If $\us_1 - \sigma_{02}^2 \delta^*_{02} \geq 0$, let $\hat{\delta}_{10} = \eta_4 (\delta^*_{10} + \sigma_{10}^{-2} \us_1 - \sigma_{10}^{-2}\sigma_{02}^{2} \delta^*_{02})$, and $\hat{\delta}_{01} = \eta_4 \delta^*_{01}$; otherwise, let $\hat{\delta}_{01} = \eta_4 ( \delta^*_{01} - \sigma_{01}^{-2} \us_1 + \sigma_{01}^{-2}\sigma_{02}^{2} \delta^*_{02})$, and $\hat{\delta}_{10} = \eta_4 \delta^*_{10}$. In addition, let $\hat{\delta}_{20} = \hat{\delta}_{21} = \hat{\delta}_{02} = \hat{\delta}_{12} = 0$, and $\hat{\delta}_{ij} = \eta_4 \delta^*_{ij}$, for the rest $\hat{\delta}_{ij}$'s.
\item [(ii)] $\sigma_{10} = 0$: \
\item []
In this case, we simply set all the $\hat{\delta}_{ij}$'s be zero. Since $\sigma_{20} = \sigma_{21} = \sigma_{10} = 0$, we get $\sigma_{01}^{2} \delta^*_{01} + \sigma_{02}^{2} \delta^*_{02} = \us_0 \leq \varepsilon$, and $\sigma_{02}^{2} \delta^*_{02} + \sigma_{12}^{2} \delta^*_{12} = - \us_2 \leq \varepsilon$, which imply $\delta^*_{ij} \leq \max_{\sigma_{ij} \neq 0} \{ \sigma_{ij}^{-2} \} \varepsilon$ for all $i$ and $j$. Thus,~\eqref{the_constraint_1} to~\eqref{the_constraint_3} are satisfied.
%Let all the $\hat{\delta}_{ij}$'s be zero.
\end{itemize}
%For both cases, the verifications are similar to the previous cases.
\end{itemize}

\bibliographystyle{ieeetr}
\bibliography{bib_ENIT}

\end{document}